\newcounter{mnotecount}[section]
  \def\moverlay{\mathpalette\mov@rlay}
  \def\mov@rlay#1#2{\leavevmode\vtop{%
     \baselineskip\z@skip \lineskiplimit-\maxdimen
     \ialign{\hfil$#1##$\hfil\cr#2\crcr}}}
\DeclareMathOperator*{\tho}{\text{\th}}
\DeclareMathOperator*{\edt}{\text{\dh}}
\newcommand{\NPl}{l}
\newcommand{\NPn}{n}
\newcommand{\NPm}{m}
\newcommand{\NPmbar}{\bar m}
\newcommand{\ba}[1]{\overline{#1}}
\renewcommand{\i}{\mathrm{i}}
\renewcommand{\d}{\mathrm{d}}
\newcommand{\hide}[1]{}
\newcommand{\half}{\frac{1}{2}}
\newcommand{\thalf}{\tfrac{1}{2}}
\newcommand{\FF}{\mathcal{F}}
\newcommand{\pert}{\epsilon}
\newcommand{\lie}{\mathcal{L}}
\newcommand{\Reals}{\mathbb R}
\newcommand{\PP}{\mathcal P}
\renewcommand{\Re}{\mathrm{Re}\, }
\renewcommand{\Im}{\mathrm{Im}\, }
\newcommand{\sa}{{A}}
\renewcommand{\sb}{{B}}
\renewcommand{\sc}{{C}}
\newcommand{\sd}{{D}}
\newcommand{\Y}[3]{\, {}_{#1}\hspace{-1.5pt}Y\hspace{-3pt}_{#2 #3}}
\newcommand{\eps}{{\epsilon}}
\newcommand{\io}{{\iota}}
\definecolor{g1}{rgb}{0.95,0.95,0.95}
\definecolor{g2}{rgb}{0.9,0.9,0.9}
\definecolor{g3}{rgb}{0.8,0.8,0.8}
\newcommand{\bTheta}{\mathbf \Theta}
\newcommand{\bQ}{\mathbf Q}
\newcommand{\massADM}{M}		
\newcommand{\angmom}{a}			
\newcommand{\MM}{\mathcal M}		
\newcommand{\linmass}{\mathcal{\dot M}}	
\theoremstyle{plain}
\newtheorem{thm}{Theorem}[section]
\newtheorem{prop}[thm]{Proposition}
\newtheorem{remark}[thm]{Remark}
\title{Charges for linearized gravity}
\author[S. Aksteiner]{Steffen Aksteiner}
\email{steffen@zarm.uni-bremen.de}
\address{QUEST, Leibniz University Hannover, Welfengarten 1, D-30167 Hannover, Germany}
\address{ZARM, University of Bremen, Am Fallturm 1, D-28359 Bremen, Germany}
\author[L. Andersson]{Lars Andersson}
\email{laan@aei.mpg.de}
\address{Albert Einstein Institute, Am M\"uhlenberg 1, D-14476 Potsdam,
  Germany}
\begin{document}

\date{January 12, 2013}

\begin{abstract}
Maxwell test fields as well as solutions of linearized gravity on the Kerr
exterior admit non-radiating modes, i.e. non-trivial time-independent
solutions.  
These are closely
related to conserved charges. In this paper we discuss the
non-radiating modes for linearized gravity, which may be seen to
correspond to the Poincare Lie-algebra. The 2-dimensional isometry group of
Kerr corresponds to a 2-parameter family of gauge-invariant non-radiating
modes representing infinitesimal perturbations of mass and azimuthal
angular momentum. We
calculate the linearized mass charge in terms of linearized Newman-Penrose
scalars.  
\end{abstract}

\maketitle

\section{Introduction}
The black hole stability problem, i.e. the problem of proving dynamical
stability for the Kerr family of black hole spacetimes, is one of the central
open problems in General Relativity. The analysis of linear test fields on
the exterior Kerr spacetime is an important step towards the full non-linear
stability problem. For test fields of spin 0, i.e. solutions of
the wave equation $\nabla^a \nabla_a \psi = 0$, estimates proving boundedness
and decay in time are known to hold. See 
\cite{finster:etal:2006:MR2215614,dafermos:rodnianski:2010arXiv1010.5137D,andersson:blue:kerrwave,tataru:tohaneanu:2011}
for references and background. 

The field equations for linear test fields of spins 1 and 2 are the 
Maxwell and linearized gravity\footnote{Note that linearized gravity is distinct from the massless spin-2 equation. On a type D background, any solution to the massless spin-2 equation is proportianal to the Weyl tensor of the spacetime. This fact is
referred to as the Buchdahl constraint, cf. \cite{Buchdahl:1958xv}, see also
equation (5.8.2) in \cite{PR:I}.} equations, respectively. 
These equations imply wave equations for the Newman-Penrose Maxwell and
linearized Weyl scalars. 
In particular, the Newman-Penrose scalars of spin weight zero
satisfy (assuming a suitable gauge condition for the case of linearized
gravity) analogs of the Regge-Wheeler equation.
These wave equations take the form 
$$
(\nabla^a \nabla_a + c_s \Psi_2) \psi_s = 0
$$
where for spin $s=1$, $c_1 = 2$, $\psi_1 = \Psi_2^{-1/3} \phi_1$, while for
spin $s=2$, $c_2 = 8$,  
and $\psi_2 = \Psi_2^{-2/3} \dot \Psi_2$. Here $\dot \Psi_2$ is the
linearized Weyl scalar of spin weight zero. See
\cite{aksteiner:andersson:2011} for details. 
As these scalars can 
be used
as potentials for the Maxwell and linearized Weyl fields, 
one may apply the
techniques developed in the previously mentioned papers to prove estimates
also for the Maxwell and linearized gravity equations. This approach has been
applied in the case of the Maxwell field on the Schwarzschild background in
\cite{blue:maxwell}. 
 
In contrast to the spin-0 case, the spin 1 and 2 field equations on the Kerr exterior admit non-trivial finite energy time-independent solutions.
We shall refer to time-independent solutions as 
non-radiating modes. There
is a close relation between gauge-invariant 
non-radiating modes and conserved charge integrals. For the Maxwell field, 
there is a two-parameter family of  
non-radiating, Coulomb type solutions which carry the two
conserved electric and magnetic charges. 
In fact, 
a Maxwell field on the Kerr exterior will disperse exactly when it has
vanishing charges. For linearized gravity, however, there
are both 
non-radiating modes corresponding to 
gauge-invariant 
conserved
charges, and ``pure gauge'' non-radiating modes. 
Thus conditions ensuring
that a solution of
linearized gravity will disperse must be a combination of charge-vanishing
and gauge conditions. 

From the discussion above, it is clear that in order to prove 
boundedness and decay for higher spin test
fields on the Kerr exterior,
it is a necessary step to eliminate the non-radiating modes. Due in part to
this additional difficulty, decay estimates for the higher spin fields have
been proved only for Maxwell test fields. See \cite{blue:maxwell} for the
Schwarzschild case and \cite{andersson:blue:nicolas:maxwell} for the Kerr
case. 
In view of
the just mentioned relation between non-radiating modes and charges, 
an essential step in doing so involves setting conserved charges to zero. 
In order to make
effective use of such charge vanishing conditions, it is necessary to have
simple expressions for the charge integrals in terms of the field strengths. 
The main result
of this paper is to provide an expression for the conserved charge corresponding to the
linearized mass, in terms of linearized curvature quantities on the Kerr
background. 

We start by discussing the relation between charges and non-radiating modes
for the case of the Maxwell field. 
Let the symmetric valence-2 spinor 
$\phi_{AB}$ be the Maxwell spinor\footnote{The following discussion is in terms of the 2-spinor formalism,
cf. \cite{PR:I, PR:II}}, 
i.e. a
solution of the massless spin-1 (source-free Maxwell) equation 
$$
\nabla_{A'}{}^A \phi_{AB} = 0
$$
and let $\FF_{ab} = \phi_{AB} \eps_{A'B'}$ be the corresponding complex
self-dual two-form. The Maxwell equation takes the form $d\FF = 0$ and hence the charge integral 
$$
\int_{S} \FF
$$
depends only on the homology class of the surface $S$. Here real and imaginary parts correspond to electric and magnetic charges,
respectively. The Kerr exterior, being diffeomorphic to
$\Reals^4$ with a solid cylinder removed, contains
topologically non-trivial 2-spheres, and hence the Maxwell equation on the
Kerr exterior admits solutions with non-vanishing charges. In view of the
fact that the charges are conserved, it is natural that there is a
time-independent solution which ``carries'' the charge. In Boyer-Lindquist
coordinates, this takes the
explicit form 
\begin{equation}\label{eq:coloumb}
\phi_{AB} = \frac{c}{(r-ia\cos\theta)^2} \iota_{(A} o_{B)} \, ,
\end{equation}
where $c$ is a complex number, and $\iota_A, o_A$ are principal spinors for
Kerr. 

In order to prove boundedness and decay for the Maxwell field, it is
necessary to make use of the above mentioned facts, see
\cite{andersson:blue:nicolas:maxwell}. 
In particular, one eliminates the non-radiating modes by imposing the charge
vanishing condition 
\begin{equation}\label{eq:Fchargezero}
\int_{S} \FF = 0 \, .
\end{equation}
Written in terms of the Newman-Penrose scalars $\phi_I$, $I = 0,1,2$, the charge vanishing condition \eqref{eq:Fchargezero} in the Carter tetrad \cite{Znajek:1977} takes the form \cite{andersson:blue:nicolas:maxwell} 
\begin{equation}\label{eq:integrability}
\int_{S^2(t,r)}  2 V_L^{-1/2} \phi_1 
+ ia\sin\theta (\phi_0 - \phi_2 ) \d\mu = 0 \, ,
\end{equation} 
where $S^2(t,r)$ is a sphere of constant $t,r$ in the Boyer-Lindquist
coordinates, $V_L = \Delta/(r^2+a^2)^2$ and $\d\mu=\sin\theta\d\theta\d\varphi$. 
This yields a relation between the $\ell = 0, m=0$ spherical harmonic of 
$\phi_1$ and the $\ell=1, m=0$ spherical harmonics with spin weights $1$, $-1$
of $\phi_0$, $\phi_2$, respectively. 

Next, we consider the spin-2 case. Recall that the Kerr spacetime is a vacuum space of Petrov type D and hence, in addition to the Killing vector fields $\partial_t, \partial_\phi$ admits a ``hidden symmetry'' manifested by the existence of the 
valence-2 Killing spinor $\kappa_{AB} = \psi \, \iota_{(A} o_{B)}$. 
Here the scalar $\psi$ is determined up to a constant, 
which we fix by setting\footnote{This choice has the natural (non vanishing) Minkowski limit $\psi = r$.} $\massADM\psi^{-3} = -\Psi_2$ on a Kerr background.
In this situation, one may consider the spin-lowered version 
$$
\psi_{ABCD} \kappa^{CD} 
$$ 
of the Weyl spinor, which is again a massless spin-1 field and hence the complex
self-dual two-form
$$
\MM_{ab} = \psi_{ABCD}\kappa^{CD} \eps_{A'B'}
$$ 
satisfies the Maxwell equations $d\MM = 0$. 
The charge for this field defined on any topologically non-trivial 2-sphere
in the Kerr exterior is
\begin{equation}\label{eq:mass-charge}
\frac{1}{4\pi i} \int_{S} \MM = \massADM \, ,
\end{equation}
cf. \cite{Jezierski:Lukasik:2006} for a tensorial version (the calculation
has been done much earlier in \cite{jordan:ehlers:sachs:1961:II}, but not in
the context of Killing spinors and 
spin-lowering). 
Here $\massADM$ is the ADM mass \cite{ADM:1962:MR0143629}
of the Kerr spacetime\footnote{Equivalently, the mass parameter in the Boyer-Lindquist form of the Kerr line element.}. The relation between the mass and charge for 
the spin-lowered Weyl tensor $\MM$ is natural in view of the fact that the divergence 
$$
\xi^{A'A} = \nabla^{A'}{}_B \kappa^{AB}
$$
is proportional to $\partial_t$, see the discussion in \cite[Chapter 6]{PR:II}. 

Note that the charge \eqref{eq:mass-charge} is in general complex. The
imaginary part corresponds to the NUT charge, which is the gravitational
analog of a magnetic charge. 
Details are not discussed in this paper, see
\cite{Ramaswamy:Sen:1981} for the construction of charge integrals in NUT
spacetime. 

For linearized gravity on the Kerr background, the non-radiating modes include
perturbations within the Kerr family, i.e. infinitesmal changes of mass and
axial rotation speed. We denote the parameters for these 
deformations $\dot \massADM, \dot \angmom$. Since $\massADM, \angmom$ are
gauge-invariant quantities, it is not possible to eliminate these modes by 
imposing a gauge condition. 
A canonical
analysis along the lines of \cite{iyer:wald:1994PhRvD..50..846I}, see below, 
yields conserved charges corresponding to the Killing fields
$\partial_t, \partial_\phi$, which in turn correspond to the gauge invariant
deformations $\dot \massADM, \dot \angmom$ mentioned above.

The infinitesimal boosts, translations and
(non-axial) rotations of the black hole yield further non-radiating
modes which are, however, ``pure gauge'' in the sense that they are generated
by infinitesimal coordinate changes. 
If one imposes suitable regularity\footnote{The Kerr family of
line elements may be viewed as part of the type D family of vacuum metrics
which includes, among others, the NUT and C-metrics. See section \ref{sec:curved} for further discussion. The perturbations corresponding
e.g. to infinitesimal deformations of the NUT parameter are singular and may
thus be exluded by suitable regularity and decay conditions. See
\cite{virmani:2011PhRvD..84f4034V}, \cite{jezierski:1995GReGr..27..821J} for
remarks.} conditions on the perturbations
which exclude e.g. those which turn on the NUT charge, a 10-dimensional
space of non-radiating modes remains. This is spanned by
the 2-dimensional space of non-gauge modes which carry the $\dot \massADM,
\dot \angmom$ charges, together with the 
``pure gauge'' non-radiating modes, and 
corresponds in a natural way to the 
Lie algebra of the Poincare group.
It can be seen from this discussion that 
a combination of charge vanishing conditions and gauge conditions
allows one to eliminate all non-radiating solutions of linearized gravity. 

The constraint equations implied by the Maxwell and linearized gravity
equations are underdetermined elliptic systems, and therefore admit
solutions of compact support, see \cite{delay:2010arXiv1003.0535D} and
references therein. In particular, one may find solutions of the constraint
equations with arbitrarily rapid fall-off at infinity. The corresponding
solutions of the Maxwell equations have vanishing charges. For the case of
linearized gravity, the charges corresponding to $\dot \massADM, \dot
\angmom$ vanish for solutions of the field equations with rapid fall-off
at infinity. For such solutions, all non-radiating modes may
therefore be eliminated by imposing suitable gauge conditions. 

The following discussion may
easily be extended to the Einstein-Maxwell equations. 
Given an asymptotically flat vacuum spacetime
$(N, g_{ab})$,
a solution of the linearized Einstein equations $\dot g_{ab}$ (satisfying suitable asymptotic conditions) and a 
Killing field $\xi^a \partial_a$ we have that 
the variation of the Hamiltonian current is an exact form, which yields 
the relation 
\begin{equation}\label{eq:xicharge} 
\dot \PP_{\xi; \infty} = \int_S \dot \bQ[\xi] - \xi \cdot \bTheta \, .
\end{equation} 
Here, $\PP_{\xi; \infty}$ is the Hamiltonian charge at infinity, 
generating the action of $\xi$, $\bQ[\xi]$ is the Noether charge two-form for
$\xi$, and $\bTheta$ is the symplectic current three-form, defined with respect
to the variation $\dot g_{ab}$. We use a $\dot{\ }$ to denote variations
along $\dot g_{ab}$, thus $\dot \PP_{\xi;\infty}$ and $\dot \bQ[\xi]$ denote the
variation of the Hamiltonian and the Noether two-form, respectively. The
integral on the right hand side of \eqref{eq:xicharge} is evaluated over an
arbitrary sphere, which generates the second homology class.

For the case of $\xi = \partial_t$, and considering solutions of the
linearized Einstein equations on the Kerr background we have, following the
discussion above, 
$$
\dot \massADM = \dot \PP_{\partial_t; \infty} 
$$
Working with the Carter tetrad, let $\Psi_i$, $i=0,\cdots, 4$ be the Weyl
scalars and let $Z^I$, $I = 0,1,2$ denote the corresponding basis 
for the space of complex, self-dual two-forms, see section
\ref{sec:form} for details. 
In this paper we shall show that the natural linearization of the
spin-lowered Weyl tensor $\MM$ is the two-form
$$
 \linmass = \psi\dot\Psi_1 Z^0 + \psi\dot\Psi_2 Z^1 + \psi\dot\Psi_3 Z^2 +
 \tfrac{3}{2} \psi\Psi_2 \dot Z^1 .
$$
As will be demonstrated, see section \ref{sec:fack} below, 
$\linmass$ 
is closed, and hence the integral 
\begin{equation}\label{eq:linmasscharge} 
\int_S \linmass
\end{equation}
defines a conserved charge. 
A charge vanishing condition for the linearized mass, 
analogous to the
one discussed above for the charges of the Maxwell field, may be introduced
by requiring that this integral vanishes. The coordinate form of this 
charge vanishing condition is 
\begin{align} \label{eq:integrability2}
\int_{S^2(t,r)} \big( 2 V_L^{-1/2} \dot{\widehat \Psi}_2 + \i a \sin \theta
\dot \Psi_{diff} \big) (r-\i a \cos\theta) \d\mu = 0 , 
\end{align}
which should be compared to the corresponding condition for the 
Maxwell case, cf.  \eqref{eq:integrability}. 
Here, $\dot{\widehat \Psi}_2$ and $\dot \Psi_{diff}$ are suitable
combinations of the linearized curvature scalars $\dot \Psi_1,
\dot\Psi_2,\dot\Psi_3$ and linearized tetrad.

Let $\dot g_{ab}$ be a solution of the linearized Einstein equation on the
Kerr background, satisfying 
suitable asymptotic conditions, and let $\dot \massADM$ be the corresponding
perturbation of the ADM mass. Letting $S = S^2(t,r)$ and evaluating  the limit
of \eqref{eq:linmasscharge} as $r\to \infty$ one finds, in view of the fact
that \eqref{eq:linmasscharge} is conserved, the identity 
$$
\dot \massADM = \frac{1}{4\pi i} \int_S \linmass
$$
for any smooth 2-sphere $S$ in the exterior of the Kerr black hole. 
Thus we have the relation 
\begin{equation}\label{eq:twocharges} 
\int_S \dot \bQ[\partial_t] - \partial_t \cdot \bTheta 
= \frac{1}{4\pi i} \int_S \linmass
\end{equation}
for any surface $S$ in the Kerr exterior. We remark that the left hand side
of \eqref{eq:twocharges} can be evaluated in terms of the metric perturbation
using the expressions for $\bQ$ and $\bTheta$ given in \cite[section
  V]{iyer:wald:1994PhRvD..50..846I}. On the other hand, the right hand side
has been calculated in terms of linearized
curvature. It would be of interest to have a direct derivation of the
resulting identity. 

The canonical analysis following \cite{iyer:wald:1994PhRvD..50..846I}
which has been discussed above shows that in addition to the conserved charge
corresponding to $\dot \massADM$, equation \eqref{eq:xicharge} with $\xi =
\partial_\phi$, the angular Killing field,  
gives a conserved charge integral 
for linearized angular momentum $\dot \angmom$. 
If $\partial_\phi$ is tangent
  to $S$, then the term $\partial_\phi \cdot \bTheta$ does not contribute in
  \eqref{eq:xicharge}. 
We remark that an expression for $\dot \angmom$ for linearized gravity on the
Schwarzschild background was given in \cite[section 3]{jezierski:1999}.  
A charge integral for
$\dot \angmom$ for linearized gravity on the Kerr background will be 
considered in a future paper. 

\begin{remark} 
\begin{enumerate} 
\item There are many candidates for a quasi-local mass expression in the
literature including, to mention just a few, 
those put forward by Penrose, Brown and York, and Wang and Yau.
See the review of Szabados 
\cite{szabados:2004LRR.....7....4S} for background and references. Although
as discussed above, cf. equation \eqref{eq:mass-charge}, 
for a spacetime of type D, there is a quasi-local mass charge, 
it must be emphasized that 
for a general spacetime on cannot expect the existence of 
a quasi-local mass
which is \emph{conserved}, i.e. independent of the 2-surface used in
its definition. The same is true for linearized gravity
on a general background. Thus the existence of a conserved charge integral
for the linearized mass is a feature which is special to linearized gravity on a
background with Killing symmetries. 

\item If we consider linearized
gravity without sources, on the Minkowski background, 
the linearized mass must vanish due to
the fact that Minkowski space is topologically trivial. This reflects the
fact that when viewed as a function on the space of Cauchy data, the ADM mass
vanishes quadratically at the trivial data,
cf. \cite{ChB:fischer:marsden:1979igsg.conf..396C}. 
On the other hand, by the positive mass theorem, 
for any non-flat spacetime, asymptotic to Minkowski space in a suitable
sense, the ADM mass defined at infinity must be positive. 
\end{enumerate}
\end{remark}

This paper is organized as follows. In section \ref{sec:form}, we introduce
bivector formalism. Conformal Killing Yano tensors and Killing spinors are
discussed in section \ref{sec:cky}. Section \ref{sec:conservedcharge} deals with
conserved charges for spin-2 fields on Minkowski (\S \ref{sec:flat} ) and
type D spacetimes (\S \ref{sec:curved}). The main result, a charge integral
in terms of linearized curvature, is derived in section \ref{sec:fack}, and
finally, section \ref{sec:conclusions} contains some concluding remarks. 

\section{Preliminaries and notation} \label{sec:form}
Let $(N,g_{ab})$ be a 4 dimensional Lorentzian spacetime of
signature $+---$, 
admitting a spinor structure. 
Although most of the results can be generalized 
to the electrovac case with cosmological constant, we restrict in this paper
to the vacuum case. In particular, we consider test Maxwell fields and
linearized gravity on vacuum type D background spacetimes.

Let 
$o_A, \io_A$ be a spinor dyad, normalized so that $o_A \io^A = 1$, and let 
\begin{align*}
 \NPl^a = o^A \bar o^{A'}, && \NPm^a = o^A \bar \io^{A'}, && \NPmbar^a = \io^A \bar o^{A'}, && \NPn^a = \io^A \bar \io^{A'} 
\end{align*}
be the corresponding null tetrad, 
satisfying $\NPl^a \NPn_a = - \NPm^a \NPmbar_a = 1$, the other inner products
being zero. The 2-spinor calculus provides a powerful tool for computations
in 4-dimensional geometry. The GHP formalism deals with dyad (or
equivalently tetrad) components of geometric objects 
and exploits the simplifications arising by
taking into account the action of dyad rescalings and permutations. 
These formalisms are closely related to the less widely used 
\emph{bivector formalism} \cite{jordan:ehlers:sachs:1961:II,Bichteler:1964, Cahen:Debever:Defrise:1967,israel:bivector:book}
in which the basic quantity is a basis for the 3-dimensional space of complex
self-dual two-forms. A two-form $Z$ is called self-dual, if $*Z = \i Z$ and anti self-dual, if $*Z = -\i Z$. Given a spinor dyad, a
natural choice\footnote{We use the convention of \cite{fayos:ferrando:jaen:1990}, which differs from \cite{israel:bivector:book,fackerell:1982} by a factor of 2 in the middle component and the numbering.} 
is 
\begin{subequations}\begin{align} 
 Z^0_{ab} &= 2 \NPmbar_{[a} \NPn_{b]} = \io_\sa \io_\sb \ba\eps_{\sa'\sb'} \\
 Z^1_{ab} &= 2\NPn_{[a} \NPl_{b]} - 2\NPmbar_{[a} \NPm_{b]}  = -2 o_{(\sa} \io_{\sb)} \ba\eps_{\sa'\sb'} \\
 Z^2_{ab} &= 2 \NPl_{[a} \NPm_{b]}  = o_\sa o_\sb \ba\eps_{\sa'\sb'} \, ,
\end{align}\end{subequations}
where the notation $2x_{[a} y_{b]} = x_a y_b - y_a x_b$ for anti symmetrization and $2x_{(a} y_{b)} = x_a y_b + y_a x_b$ for symmetrization is used.
We use capital latin indices $I,J,K$ taking values in $0,1,2$ for the
elements in the bivector triad $Z^I$. 
The metric $g_{ab}$ induces a triad metric $G_{IJ}$ and its inverse
$G^{IJ}$ given by 
\begin{align*}
G^{IJ} = Z^I \cdot Z^J = \begin{pmatrix} 0 & 0 & 1 \\ 0 & -2 & 0 \\ 1 & 0 & 0 \end{pmatrix} , &&
G_{IJ} = \begin{pmatrix} 0 & 0 & 1 \\ 0 & -\thalf & 0 \\ 1 & 0 & 0 \end{pmatrix} \, .
\end{align*}
Here, $\cdot$ is the induced inner product on two-form, $ Z^I \cdot Z^J = \half Z^I{}_{ab}  Z^{Jab}$. Triad indices are raised and lowered with this metric,
\begin{align*}
 Z_{0} = Z^2 , && Z_{1} = -\thalf Z^1 , && Z_{2} = Z^0 .
\end{align*}
More general we have
\begin{prop} \label{prop:zz}
\begin{subequations}\begin{align}
 Z^J{}_a{}^c Z^K{}_{bc} &= \frac{1}{2} G^{JK}g_{ab} + \epsilon^{JKL} Z_{L ab}\\
 Z^J{}_{[a}{}^c \bar Z^K{}_{b]c} &= 0 \\
 Z^{Jab} \bar Z^K{}_{ab} &= 0
\end{align}\end{subequations}
with $\epsilon^{JKL}$ the totally antisymmetric symbol fixed by $\epsilon^{012}=1$.
\end{prop}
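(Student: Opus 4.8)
The plan is to pass to the $2$-spinor formalism, where all three identities reduce to the normalization $o_A\io^A = 1$ together with $g_{ab} = \eps_{AB}\bar\eps_{A'B'}$. First I would record that $Z^I_{ab} = \zeta^I_{AB}\bar\eps_{A'B'}$ with the symmetric valence-$2$ spinors $\zeta^0_{AB} = \io_A\io_B$, $\zeta^1_{AB} = -2o_{(A}\io_{B)}$, $\zeta^2_{AB} = o_Ao_B$ — so that $Z^I\cdot Z^J = \zeta^I_{AB}\zeta^{JAB}$ (up to the fixed numerical constant of the conventions), which is the triad metric $G^{IJ}$ of the statement — and that the conjugate triad is anti-self-dual, $\bar Z^K_{ab} = \eps_{AB}\bar\zeta^K_{A'B'}$. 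The key structural observation is that every primed index of every $Z^I$ sits inside the $\eps$-spinor $\bar\eps_{A'B'}$, so a contraction on a primed index always collapses to a scalar (e.g.\ $\bar\eps_{A'}{}^{C'}\bar\eps_{B'C'} = -\bar\eps_{A'B'}$), and all the content of the products below lives in the unprimed spinor algebra, governed by $o_A\io^A=1$.

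For~(a) I would contract the unprimed factors to obtain $Z^J{}_a{}^cZ^K{}_{bc} = -\bigl(\zeta^J{}_A{}^C\zeta^K{}_{BC}\bigr)\bar\eps_{A'B'}$ and then split the bispinor $\zeta^J{}_A{}^C\zeta^K{}_{BC}$ into its trace and trace-free parts. The trace part is proportional to $\eps_{AB}$, and after tensoring with $\bar\eps_{A'B'}$ and using $g_{ab}=\eps_{AB}\bar\eps_{A'B'}$ it yields exactly $\tfrac12 G^{JK}g_{ab}$. The trace-free part $\zeta^J{}_{(A}{}^C\zeta^K{}_{B)C}\bar\eps_{A'B'}$ is again a self-dual two-form; it is antisymmetric under $J\leftrightarrow K$ (the diagonal cases vanish since $\io^C\io_C = o^Co_C = 0$), so only the three pairs $(J,K)=(0,1),(1,2),(0,2)$ require a check, each being a one-line contraction of dyad spinors that produces the multiple of $Z^0$, $Z^2$, $Z^1$ respectively demanded by $\eps^{JKL}Z_{Lab}$. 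A route that avoids the case check: the general two-form identity $F_{ac}F_b{}^c - {}^*F_{ac}{}^*F_b{}^c = \tfrac12 g_{ab}F_{cd}F^{cd}$, applied to the self-dual forms $Z^J+Z^K$, already fixes the part of $Z^J{}_a{}^cZ^K{}_{bc}$ symmetric in $a\leftrightarrow b$ — equivalently in $J\leftrightarrow K$, since $Z^J{}_a{}^cZ^K{}_{bc} = Z^K{}_b{}^cZ^J{}_{ac}$ — to be $\tfrac12 G^{JK}g_{ab}$; the antisymmetric part is then a self-dual two-form antisymmetric in $J,K$, hence $c^{JK}{}_LZ^L_{ab}$, and contracting with $Z^{Mab}$ and using $Z^L\cdot Z^M = G^{LM}$ forces $c^{JK}{}_L = \eps^{JK}{}_L$. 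Structurally this is just the statement that the $Z^I$ are the standard $\mathfrak{so}(3,\mathbb{C})$ generators acting on the space of self-dual two-forms, with $G^{IJ}$ the invariant form.

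Parts~(b) and~(c) I expect to fall out of self-duality versus anti-self-duality. Contracting one index of each kind, $Z^J{}_a{}^c\bar Z^K{}_{bc}$ collapses to $-\zeta^J_{AB}\bar\zeta^K_{A'B'}$, which is manifestly symmetric under $a\leftrightarrow b$ (both $\zeta^J$ and $\bar\zeta^K$ are symmetric), so its antisymmetrization in~(b) vanishes; and $Z^{Jab}\bar Z^K{}_{ab} = \bigl(\zeta^{JAB}\eps_{AB}\bigr)\bigl(\bar\eps^{A'B'}\bar\zeta^K_{A'B'}\bigr) = 0$, each factor vanishing because a symmetric spinor is annihilated by $\eps$. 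Equivalently, (b) and (c) express the orthogonality of the self-dual and anti-self-dual subspaces of two-forms.

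I do not anticipate a genuine obstacle; the content is essentially that, restricted to the self-dual two-forms, metric contraction realizes the complexified vector cross product encoded by $G^{IJ}$ and $\eps^{IJK}$. The only real work is bookkeeping — keeping the dualization, index-raising, and factor-of-two conventions consistent (in particular the normalization of $Z^1$ and the definition of $\cdot$), and running the handful of dyad contractions needed in~(a). The whole argument can equally be carried out directly in the null tetrad from $Z^0 = 2\NPmbar_{[a}\NPn_{b]}$, $Z^1 = 2\NPn_{[a}\NPl_{b]} - 2\NPmbar_{[a}\NPm_{b]}$, $Z^2 = 2\NPl_{[a}\NPm_{b]}$ and the orthogonality relations $\NPl^a\NPn_a = -\NPm^a\NPmbar_a = 1$, at the cost of a longer but entirely mechanical computation.
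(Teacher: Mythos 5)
Your proposal is correct, but there is nothing in the paper to compare it against: Proposition \ref{prop:zz} is stated without proof (it is standard bivector algebra, implicit in the cited literature of Israel and of Fayos--Ferrando--Ja\'en), and is then simply invoked in the proof of Proposition \ref{prop:biveq}. Your spinor-dyad route is the natural way to supply the missing verification and is consistent with the spinor representations $Z^0_{ab}=\io_A\io_B\bar\eps_{A'B'}$, $Z^1_{ab}=-2o_{(A}\io_{B)}\bar\eps_{A'B'}$, $Z^2_{ab}=o_Ao_B\bar\eps_{A'B'}$ that the paper itself records: parts (b) and (c) are immediate from the orthogonality of the self-dual and anti-self-dual sectors exactly as you say, and for (a) the trace/trace-free split of $\zeta^J{}_A{}^C\zeta^K{}_{BC}$ reproduces $\tfrac12 G^{JK}g_{ab}+\epsilon^{JKL}Z_{Lab}$ with the stated normalizations. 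Two small points to tighten: the remark that the diagonal cases vanish ``since $\io^C\io_C=o^Co_C=0$'' disposes of $J=K=0,2$ but not of $J=K=1$, where you need the cross terms to combine via $\alpha_A\beta_B-\alpha_B\beta_A=\eps_{AB}\,\alpha_C\beta^C$ (equivalently, Cayley--Hamilton for the trace-free matrix $\zeta^1{}_A{}^B$); and in your alternative polarization route the expansion coefficient $c^{JK}{}_L$ is only pinned down after evaluating one triple contraction $Z^{Mab}Z^J{}_a{}^cZ^K{}_{bc}$, so that route does not entirely avoid a computation of the same size as the three-pair case check. Neither point is a genuine gap; the argument is routine and complete once those contractions are written out.
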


A real two-form $F_{ab}$, e.g. the Maxwell field strength, has spinor
representation
\begin{align*}
 F_{ab} = \phi_{\sa\sb} \eps_{\sa'\sb'} + \ba \phi_{\sa'\sb'} \eps_{\sa\sb}.
\end{align*}
It is equivalent to the symmetric 2-spinor $\phi_{\sa\sb} = \phi_2 o_\sa
o_\sb -2 \phi_1 o_{(\sa} \io_{\sb)} + \phi_0 \io_\sa \io_\sb$, where the six
real degress of freedom of $F_{ab}$ are encoded in 3 complex scalars 
\begin{align*}
 \phi_0 &= \phi_{\sa\sb} o^\sa o^\sb  = F_{ab}  \NPl^a \NPm^b = F \cdot Z_0\\
 \phi_1 &= \phi_{\sa\sb} \io^\sa o^\sb = \thalf F_{ab} ( \NPl^a \NPn^b - \NPm^a \NPmbar^b ) = F \cdot Z_1\\
 \phi_2 &= \phi_{\sa\sb} \io^\sa \io^\sb = F_{ab}  \NPmbar^a \NPn^b = F \cdot Z_2 \, .
\end{align*}
So the real two-form has bivector representation
\begin{align*}
F = \phi_0 Z^0 + \phi_1 Z^1 + \phi_2 Z^2 + \ba \phi_0 \ba Z^0 + \ba \phi_1 \ba Z^1 + \ba \phi_2 \ba Z^2,
\end{align*}
or in index notation $ \phi_I = F \cdot Z_I $ and $ F = \phi_I Z^I + \ba \phi_I \ba Z^I $.

The Weyl tensor is a symmetric 2-tensor over bivector space and has spinor representation
\begin{align*}
 -C_{abcd} = \Psi_{\sa\sb\sc\sd} \ba\eps_{\sa'\sb'} \ba\eps_{\sc'\sd'} + \ba\Psi_{\sa'\sb'\sc'\sd'} \eps_{\sa\sb} \eps_{\sc\sd} \, ,
\end{align*}
where $\Psi_{\sa\sb\sc\sd}$ is a completely symmetric 4-spinor. The 10 degrees of freedom of the Weyl tensor are given by 5 complex scalars\footnote{Due to its symmetries, the Weyl tensor is a symmetric two-tensor over the space of two-forms. The induced inner product is $C \cdot (Z_I, Z_J) = \frac{1}{4} C_{abcd} Z_I^{ab} Z_J^{cd}$.}
\begin{alignat*}{3}
 \Psi_0 &= \Psi_{\sa\sb\sc\sd} \, o^\sa o^\sb o^\sc o^\sd &&= -C_{abcd} \NPl^a \NPm^b \NPl^c \NPm^d &&= -C \cdot (Z_0,Z_0)\\
 \Psi_1 &= \Psi_{\sa\sb\sc\sd} \, o^\sa o^\sb o^\sc \io^\sd &&= -C_{abcd} \NPl^a \NPn^b \NPl^c \NPm^d &&= -C \cdot (Z_0,Z_1)\\ 
 \Psi_2 &= \Psi_{\sa\sb\sc\sd} \, o^\sa o^\sb \io^\sc \io^\sd &&= -C_{abcd} \NPl^a \NPm^b \NPmbar^c \NPn^d &&= -C \cdot (Z_0,Z_2)= -C \cdot (Z_1,Z_1)\\ 
 \Psi_3 &= \Psi_{\sa\sb\sc\sd} \, o^\sa \io^\sb \io^\sc \io^\sd &&= -C_{abcd} \NPl^a \NPn^b \NPmbar^c \NPn^d &&= -C \cdot (Z_2,Z_1)\\
 \Psi_4 &= \Psi_{\sa\sb\sc\sd} \, \io^\sa \io^\sb \io^\sc \io^\sd &&= -C_{abcd} \NPn^a \NPmbar^b \NPn^c \NPmbar^d &&= -C \cdot (Z_2,Z_2) \, .
\end{alignat*}
Similarly we could have used the Weyl 2-bivector 
\begin{align*}
C_{IJ} = - \frac{1}{4} C_{abcd} Z_I^{ab} Z_J^{cd} =  
\begin{pmatrix} 
\Psi_0 & \Psi_1 & \Psi_2 \\  
\Psi_1 & \Psi_2 & \Psi_3 \\
\Psi_2 & \Psi_3 & \Psi_4
\end{pmatrix} 
\end{align*}
which relates to the real Weyl tensor via
\begin{align} \label{eq:curv}
-C_{abcd} = C_{IJ} Z^I_{ab} \otimes Z^J_{cd} + \ba C_{IJ} \ba Z^I_{ab} \otimes \ba Z^J_{cd} \, .
\end{align}

Because of different conventions and normalisations in the literature \cite{jordan:ehlers:sachs:1961:II,Bichteler:1964, Cahen:Debever:Defrise:1967,israel:bivector:book}, we rederive here the equations of structure in bivector formalism. Based on Cartan's equations of structure for tetrad one-forms 
\footnote{Connection and curvature are defined by $\omega^a{}_{b\mu}= e^a{}_\nu \nabla_\mu e_b{}^\nu$ and $\Omega^a{}_{b\mu\nu} = 2 e^a{}_\sigma \nabla_{[\mu}\nabla_{\nu]} e_b{}^\sigma$, respectively.}
\begin{align} \label{eq:cartan1}
 \d e^a = - \omega^a{}_b \wedge e^b &&& 
 \Omega^a{}_b = \d \omega^a{}_b + \omega^a{}_c \wedge \omega^c{}_b \, ,
\end{align}
Bianchi identities
\begin{align} \label{eq:bianchi1}
 \Omega^a{}_b \wedge e^b = 0  &&&  
 \d \Omega^a{}_b = \Omega^a{}_c \wedge \omega^c{}_b - \omega^a{}_c \wedge \Omega^c{}_b \, ,
\end{align}
and definitions of connection one-forms $\sigma_J$ and curvature two-forms $\Sigma_J$  in bivector formalism,
\begin{align} \label{eq:conncurv}
 \omega_{ab} \, e^a \wedge e^b = -2 \sigma_J Z^J - 2 \bar \sigma_J \bar Z^J &&& 
 \Omega_{ab} e^a \wedge e^b = - 2 \Sigma_J Z^J - 2 \bar\Sigma_J \bar Z^J, 
\end{align} 
we find
\begin{prop} \label{prop:biveq}
The bivector equations of structure are
\begin{align} \label{eq:cartan2}
 \d Z^J = -2 \epsilon^{J K L} \sigma_K \wedge Z_L &&& 
\Sigma_J = \d \sigma_J + \half \epsilon_{JKL} \sigma^K \wedge \sigma^L
\end{align}
while the Bianchi identities read
\begin{align} \label{eq:bianchi2}
 \Sigma_{[J} \wedge Z_{K]} = 0 &&& 
 d\Sigma_J = - \epsilon_{JKL} \Sigma^K \wedge \sigma^L \, .
\end{align} 
Here $\wedge$ is the usual wedge product of one-forms $\sigma^J$ and two-forms $Z^J, \Sigma^J$.
\end{prop}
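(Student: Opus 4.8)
The statement is, conceptually, that the self-dual bivector triad $Z^J$ reduces the structure group of the Lorentz frame bundle to $SL(2,\mathbb{C})$, that $\sigma_J$ are the components of the induced connection, and that \eqref{eq:cartan2}--\eqref{eq:bianchi2} are just Cartan's structure equations and the Bianchi identities for that reduced connection (this is classical, cf. the references cited; it is re-derived only to fix conventions). Explicitly, the plan is to substitute the bivector definitions \eqref{eq:conncurv} into the tetrad equations \eqref{eq:cartan1}--\eqref{eq:bianchi1} and to collapse the resulting $\mathfrak{so}(1,3)$-valued expressions onto the three-dimensional self-dual sector using Proposition~\ref{prop:zz}. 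The algebraic engine is the first identity of that proposition: antisymmetrising it in $a,b$ gives the ``commutator of self-dual bivectors'' $Z^J{}_{[a}{}^c Z^K{}_{b]c} = \epsilon^{JKL} Z_{L\,ab}$, which manufactures every $\epsilon$-term below, while the mixed relations $Z^J{}_{[a}{}^c \bar Z^K{}_{b]c} = 0$ and $Z^{Jab}\bar Z^K{}_{ab} = 0$ express the orthogonality and commutativity of the self-dual and anti-self-dual sectors and thus guarantee that the $\bar\sigma_J$ never enter the equations for $\sigma_J$.

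For the first structure equation in \eqref{eq:cartan2}: each $Z^J$ is a wedge of two of the tetrad one-forms $\NPl,\NPn,\NPm,\NPmbar$ (for instance $Z^0_{ab} = 2\NPmbar_{[a}\NPn_{b]}$), so $\d Z^J$ is obtained from the Leibniz rule together with $\d e^a = -\omega^a{}_b\wedge e^b$; after regrouping, the right-hand side is precisely the natural action of the connection one-form $\omega^a{}_b$ on the bivector $Z^J$. Inserting the bivector decomposition of $\omega_{ab}$ from \eqref{eq:conncurv}, the anti-self-dual piece drops by the mixed relations and the self-dual piece, reorganised with the bracket identity, is exactly $-2\epsilon^{JKL}\sigma_K\wedge Z_L$. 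For the second structure equation I would feed the same decomposition of $\omega$ into $\Omega^a{}_b = \d\omega^a{}_b + \omega^a{}_c\wedge\omega^c{}_b$ and read off the self-dual component via \eqref{eq:conncurv}: the $\d\omega$ term contributes $\d\sigma_J$, and $\omega\wedge\omega$ contributes $\thalf\epsilon_{JKL}\sigma^K\wedge\sigma^L$ once more through the bracket identity (the $\thalf G^{KL}g_{ab}$ piece vanishing because $G^{KL}$ is symmetric while $\sigma^K\wedge\sigma^L$ is not, and the self-dual/anti-self-dual cross terms vanishing by the mixed relations).

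The Bianchi identities \eqref{eq:bianchi2} come out the same way. The algebraic Bianchi identity $\Omega^a{}_b\wedge e^b = 0$ from \eqref{eq:bianchi1}, translated through this dictionary and projected onto the triad basis, becomes the vanishing of the $J,K$-antisymmetric part of $\Sigma_J\wedge Z_K$, i.e. $\Sigma_{[J}\wedge Z_{K]} = 0$. For the differential Bianchi identity, I would translate $\d\Omega^a{}_b = \Omega^a{}_c\wedge\omega^c{}_b - \omega^a{}_c\wedge\Omega^c{}_b$ in the same way; alternatively, applying $\d$ to the second structure equation and using $\d\d = 0$ with the antisymmetry of $\epsilon_{JKL}$ gives $\d\Sigma_J = \epsilon_{JKL}\,\d\sigma^K\wedge\sigma^L$, into which re-substituting the second structure equation yields the stated identity, the residual triple-$\sigma$ term $\epsilon_{JKL}\epsilon^K{}_{MN}\sigma^M\wedge\sigma^N\wedge\sigma^L$ vanishing by the contraction identities for $\epsilon^{JKL}$ together with the total antisymmetry of the triple wedge of one-forms.

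The only genuine difficulty is bookkeeping: matching the normalisation and sign conventions fixed in \eqref{eq:conncurv}, keeping raising and lowering of triad indices with $G_{IJ}$ straight, and verifying at each step that the second and third relations of Proposition~\ref{prop:zz} really do remove all anti-self-dual contributions. Once the bracket identity and those numerical factors are in hand, each of the four identities is a short direct computation.
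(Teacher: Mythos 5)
Your plan is correct and, for three of the four identities, it is essentially the paper's own proof: the paper likewise expands $Z^J=\tfrac12 Z^J_{ab}\,e^a\wedge e^b$, feeds $\d e^a=-\omega^a{}_b\wedge e^b$ and the decomposition \eqref{eq:conncurv} into \eqref{eq:cartan1}, and uses Proposition~\ref{prop:zz} (the antisymmetrised product producing $\epsilon^{JKL}Z_{L\,ab}$, and the mixed relations killing the anti-self-dual pieces) to obtain both equations in \eqref{eq:cartan2}; your ``alternative'' derivation of the differential Bianchi identity, applying $\d$ to the second structure equation and re-substituting it so that the triple-$\sigma$ terms cancel via the $\epsilon$-contraction identity, is exactly what the paper does. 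The one genuine divergence is the algebraic identity $\Sigma_{[J}\wedge Z_{K]}=0$: the paper does not translate $\Omega^a{}_b\wedge e^b=0$ from \eqref{eq:bianchi1}, but instead computes $0=\d^2 Z^J$, substituting the first structure equation twice and then the second, which yields $\epsilon^{JKL}\Sigma_K\wedge Z_L=0$ in a few lines entirely within the bivector calculus. Your route is also valid, but it is less immediate than the phrase ``projected onto the triad basis'' suggests: since $\bar Z^L\wedge Z_K=0$ and $Z^L\wedge Z_K\propto\delta^L_K\,\mathrm{vol}$, the statement $\Sigma_{[J}\wedge Z_{K]}=0$ amounts to the symmetry of the self-dual--self-dual block of the curvature, and extracting that from the vector-valued three-form identity $\Omega^a{}_b\wedge e^b=0$ requires the standard but non-trivial step that the first Bianchi identity (together with the pair antisymmetries) forces pair symmetry of the Riemann tensor; the $\d^2 Z^J=0$ argument buys you this for free from identities already established, which is why the paper chooses it. Also keep in mind the normalisation bookkeeping you flag yourself: triad indices on $\epsilon^{JKL}$ are moved with $G_{IJ}$, and the paper explicitly uses $\det G_{JK}=\tfrac12$ to land the factor $\tfrac12$ in front of $\epsilon_{JKL}\sigma^K\wedge\sigma^L$.
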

\begin{proof}
Expanding the bivectors $Z^J = \half Z^J _{ab} e^a \wedge e^b$, we find
\begin{align*}
 \d Z^J 
&= \half Z^J _{ab}  \left( \d e^a \wedge e^b - e^a \wedge \d e^b \right) 
= Z^J_{a b} \d e^a \wedge e^b \\
&= - Z^J_{a b} \, \omega^a{}_c e^c \wedge e^b \\
&= Z^J_{a b} \left( \sigma_K Z^{Ka}{}_c + \bar \sigma_K \bar Z^{Ka}{}_c \right) \wedge e^c \wedge e^b \\
&= \epsilon^{JKL} Z_{Lbc} \sigma_K \wedge e^c \wedge e^b \\
&= -2 \epsilon^{JKL} \sigma_K \wedge Z_L \, 
\end{align*}
where proposition \ref{prop:zz} has been used in the third step.
For the second equation of structure, we plug \eqref{eq:conncurv} into \eqref{eq:cartan1},
\begin{align*}
 -\Sigma_J Z^J_{ab} - \bar\Sigma_J \bar Z^J_{ab} 
&= -\d\sigma_J Z^J_{ab} - \d \bar\sigma_J \bar Z^J_{ab} + (\sigma_J Z^J_{ac} + \bar\sigma_J \bar Z^J_{ac}) \wedge (\sigma_K Z^{Kc}{}_{b} + \bar\sigma_K \bar Z^{Kc}{}_{b}) \, .
\end{align*}
Since $Z^J \cdot \bar Z^K = 0$ and proposition \ref{prop:zz}, the selfdual part reads
\begin{align*}
 \Sigma_J Z^J_{ab} = \d \sigma_J Z^J_{ab} + \epsilon^{KLJ} Z_{J ab} \sigma_K \wedge \sigma_L \, .
\end{align*}
Changing index positions by using $\det G_{JK} = \thalf$ gives the 2nd equation of structure. For the first Bianchi identity, look at
\begin{align*}
 0 &= \d^2 Z^J \\
 &= -2 \epsilon^{JKL} \left( \d\sigma_K \wedge Z_L - \sigma_K \wedge \d Z_L \right) \\
 &= -2 \epsilon^{JKL} \left( \Sigma_K \wedge Z_L - \half \epsilon_{KNM} \sigma^N \wedge \sigma^M \wedge Z_L + \sigma_K \wedge \epsilon_{LNM} \sigma^N \wedge Z^M \right) \\
 &= -2 \epsilon^{JKL} \Sigma_K \wedge Z_L \underbrace{+ \sigma^L \wedge \sigma^J \wedge Z_L - \sigma^J \wedge \sigma^L \wedge Z_L -2 \sigma_L \wedge \sigma^J \wedge Z^L}_{=0} + 2 \underbrace{\sigma_K \wedge  \sigma^K}_{=0} \wedge Z^J
\end{align*}
where the identity $\epsilon^{IJK} \epsilon_{INM} = \delta^J_N \delta^K_M - \delta^J_M \delta^K_N$ has been used. Finally, the second Bianchi identity is
\begin{align*}
 \d \Sigma_J 
 &= -\epsilon_{JKL} \d \sigma^K \wedge \sigma^L \\
 &= -\epsilon_{JKL} (\Sigma^K - \epsilon^{KMN} \sigma_M \wedge \sigma_N) \wedge \sigma^L \\
 &= -\epsilon_{JKL} \Sigma^K \wedge \sigma^L + \underbrace{\sigma_L \wedge \sigma_J \wedge \sigma^L}_{=0} -  \sigma_J \wedge \underbrace{\sigma_L \wedge \sigma^L}_{=0} \, .
\end{align*}
\end{proof}
\begin{remark}
 Instead of using Cartan equations for the tetrad one could have used the bivector connection form
\begin{align}
\omega_{IJa} := \eps_{IJK} \sigma^K_a =  Z_{[J}^{bc} \nabla_a Z_{I]bc} \, .
\end{align}
\end{remark}

For later use it is convenient to write the components of the equations of structure explicitely. The connection one-forms for example can be expressed in terms of NP spin
coefficients,
\begin{subequations}\begin{alignat}{2}
\sigma_{0a} &= \NPm^b \nabla_a \NPl_b &&= \tau \NPl_a + \kappa \NPn_a - \rho \NPm_a - \sigma \NPmbar_a \\
\sigma_{1a} &= \half \left( \NPn^b \nabla_a \NPl_b - \NPmbar^b \nabla_a \NPm_b \right)  &&=-\epsilon' \NPl_a + \epsilon \NPn_a + \beta' \NPm_a - \beta \NPmbar_a \\
\sigma_{2a} &= -\NPmbar^b \nabla_a \NPn_b &&= -\kappa' \NPl_a - \tau' \NPn_a + \sigma' \NPm_a + \rho' \NPmbar_a \, .
\end{alignat}\end{subequations}
The middle component $\sigma_{1a}$ collects all unweighted coefficients and so can be used to define the GHP covariant derivative $\Theta_a \eta = (\nabla_a - p \sigma_{1a} - q \ba \sigma_{1a}) \eta$. To avoid clutter in the notation, we write $\Gamma:=\sigma_0$ and $\sigma_2 = -\Gamma'$, where $'$ is the GHP prime operation\cite{GHP}. Derivatives of the spinor dyad can now be written in the compact form $\Theta_a o^A = - \Gamma_a \io^A$ and $\Theta_a \io^A = - \Gamma'_a o^A$, and the components of the first equations of structure, which we present here for convenience with the usual exterior derivative and with weighted exterior derivative $\d^\Theta = \d - p \sigma_1 \wedge - q \ba \sigma_1 \wedge$, read
\begin{subequations}\begin{align} 
\d^\Theta Z^0 &= \Gamma' \wedge Z^1  & \Leftrightarrow & & \d Z^0 &= -2 \sigma_1 \wedge  Z^0 + \Gamma' \wedge Z^1\\
\d^\Theta Z^1 &= 2 \Gamma \wedge Z^0 + 2 \Gamma' \wedge Z^2 & \Leftrightarrow & & \d Z^1 &= 2 \Gamma \wedge Z^0 + 2 \Gamma' \wedge Z^2\\
\d^\Theta Z^2 &= \Gamma \wedge Z^1 & \Leftrightarrow & & \d Z^2 &= 2 \sigma_1 \wedge Z^2 + \Gamma \wedge Z^1 .
\end{align}\end{subequations}
Note that the middle component can be simplified to $\d Z^1 = - h \wedge Z^1 $ with the one-form $ h =  2 (\rho' \NPl + \rho \NPn -\tau' \NPm - \tau \NPmbar)$. This fact and a relation between type D curvature $\Psi_2$ and $h$ will be crucial in the derivation of the conservation law in section \ref{sec:fack}.

In vacuum, we have for the curvature two-forms $\Sigma_J= C_{JK} Z^K$ and the components of the second equations of structure read
\begin{subequations}\begin{align}
 \Sigma_0 &= C_{0J}Z^{J} = \d^\Theta \Gamma = \d \Gamma - 2 \sigma_1 \wedge \Gamma \\
 \Sigma_1 &= C_{1J}Z^{J} = \d \sigma_1 - \Gamma \wedge \Gamma' \\
 \Sigma_2 &= C_{2J}Z^{J} = -\d^\Theta \Gamma' = -\d \Gamma' - 2 \sigma_1 \wedge \Gamma' \, .
\end{align}\end{subequations}
Finally the Bianchi identities are
\begin{subequations}\begin{align}
\d^\Theta \Sigma_0 &= -2\Gamma \wedge \Sigma_1 
& \Leftrightarrow & &\d \Sigma_0 &= 2 \sigma_1 \wedge \Sigma_0 - 2 \Gamma \wedge \Sigma_1 \\
\d^\Theta \Sigma_1 &= - \Gamma' \wedge \Sigma_0 - \Gamma \wedge \Sigma_2  
& \Leftrightarrow & &\d \Sigma_1 &= - \Gamma' \wedge \Sigma_0 - \Gamma \wedge \Sigma_2 \label{eq:bianchi}\\
\d^\Theta \Sigma_2 &= - 2 \Gamma' \wedge \Sigma_1 
& \Leftrightarrow & & \d \Sigma_2 &= -2 \sigma_1 \wedge \Sigma_2 - 2 \Gamma' \wedge \Sigma_1 . 
\end{align}\end{subequations}

\section{Conformal Killing Yano tensors and Killing spinors} \label{sec:cky}

Conformal Killing Yano tensors of rank 2 are two-forms $Y_{ab}$ solving the conformal Killing Yano equation,
 \begin{align} \label{eq:cyt}
 Y_{a(b;c)} = g_{bc}\xi_a -g_{a(b}\xi_{c)}, \text{ where } \xi_a = \tfrac{1}{3} Y_a{}^b{}_{;b} .
\end{align}
It is well known, that the divergence $\xi^a$ is a Killing vector and in
case it vanishes, $Y_{ab}$ is called Killing Yano tensor. The symmetrised
product $X_{c(a}Y_{b)}{}^c =:K_{ab}$ of Killing Yano tensors
$X_{ab}, Y_{ab}$ is a Killing tensor, $\nabla_{(a} K_{bc)} = 0$,
which can be used to construct a constant of motion or a symmetry operator for e.g. the scalar
wave equation, known as Carter's constant and Carter operator, respectively. By inserting $Y_{ab} = \kappa_{AB} \eps_{A'B'} + \bar
\kappa_{A'B'} \eps_{AB}$ into \eqref{eq:cyt} one can show that $\kappa_{AB}$
and $\bar \kappa_{A'B'}$ satisfy the Killing spinor equation 
\begin{align} \label{eq:ks}
 \nabla_{A'(A} \kappa_{BC)} = 0
\end{align}
and its complex conjugated version. For the spinor components $\kappa_{AB} = \kappa_2
o_A o_B -2 \kappa_1 o_{(A} \io_{B)} + \kappa_0 \io_A \io_B  $ (or
equivalently the self dual bivector components of $Y_{ab}$, we find the
following set of eight scalar equations 
\begin{align}
 \begin{aligned} \label{eq:kscomp1}
 \tho \kappa_0 = -2 \kappa \kappa_1 , &&& \edt \kappa_0 = -2 \sigma \kappa_1  ,&&& 
 {\tho}' \kappa_2 = -2 \kappa' \kappa_1 , &&& {\edt}' \kappa_2 = -2 \sigma' \kappa_1 
\end{aligned}\\
\begin{aligned} \label{eq:kscomp2}
({\edt}' + 2\tau')\kappa_0 +2 (\tho + \rho)\kappa_1 = -2 \kappa \kappa_2 , &&&
({\tho}' + 2\rho')\kappa_0 +2 (\edt + \tau) \kappa_1 = -2 \sigma \kappa_0 \\
({\edt} + 2\tau)\kappa_2 +2 ({\tho}' + \rho')\kappa_1 = -2 \kappa' \kappa_0 , &&&
({\tho} + 2\rho)\kappa_2 +2 ({\edt}' + \tau') \kappa_1 = -2 \sigma' \kappa_2 \,,
\end{aligned} 
\end{align}
by projecting \eqref{eq:ks} into a spinor dyad. Thus, we have three different sets of equations, \eqref{eq:cyt}, \eqref{eq:ks}, 
(\ref{eq:kscomp1},\ref{eq:kscomp2}), which are equivalent and we will use the most appropriate for the problem at hand.

As spin-s fields are heavily restricted on curved backgrounds (Buchdahl constraint, see equation (5.8.2) in \cite{PR:I}), so are Killing spinors. Consider a Killing spinor $\kappa_{A_1 ... A_n} = \kappa_{(A_1 ... A_n)}$ which satisfies the  Killing spinor equation of valence $n$
\begin{align}
 \nabla_{B'(B}\kappa_{A_1 ... A_n)} = 0 \, .
\end{align}
Contracting a second derivative $\nabla^{B'}{}_C$ and symmetrising gives
\begin{align*}
 0 &= \nabla^{B'}{}_{(C} \nabla_{|B'|B} \kappa_{A_1 ... A_n)} \\
 &= -\Box_{(BC} \kappa_{A_1 ... A_n)} \\
 &= \Psi_{(BCA_1}{}^D \kappa_{D A_2 ... A_n)} + \dots + \Psi_{(BCA_n}{}^D \kappa_{A_1 ... A_{n-1}D)} \\
 &= n\Psi_{(BCA_1}{}^D \kappa_{D A_2 ... A_n)} \, .
\end{align*}
For Killing spinors of valence 1 (satisfying the twistor equation) this yields $ 0 = \Psi_{ABCD} \kappa^D$ as can be found in \cite{PR:II}, eq.(6.1.6). For 2-spinors we find
\begin{align} \label{eq:IntCond}
 0 = \Psi_{(ABC}{}^D \kappa_{DE)} \, .
\end{align}
For non trivial $\kappa$, this restricts the spacetime to be of Petrov type $D,N$ or $O$.
For a given spacetime of type D in a principal frame (only $\Psi_2 \neq 0$) \eqref{eq:IntCond} becomes
\begin{align*}
 0 &= \Psi_2 \, o_{(A} o_B \io_C \io_D \left( \kappa_0 \io^D \io_{E)} + \kappa_1 o^D \io_{E)} + \kappa_1 \io^D o_{E)} + \kappa_2 o^D o_{E)} \right) \\
 &= \Psi_2 \left( C_1 \kappa_0 o_{(A} \io_B \io_C \io_{E)} + C_2 \kappa_2 \io_{(A} o_B o_C o_{E)} \right)
\end{align*}
with constants $C_1,C_2$ and it follows $\kappa_0 \equiv 0 \equiv
\kappa_2$. The remaining component satisfies the simplified equations
\eqref{eq:typeDKS}, which have only one non trivial complex solution, cf. \cite{Glass:1996} where explicit integration of the conformal Killing Yano equation was done.

\section{Conserved Charges} \label{sec:conservedcharge} 

\subsection{Conserved charges for Minkowski spacetime} \label{sec:flat}

The Killing spinor equation or conformal Killing Yano equation on Minkowski space has been widely discussed in the literature \cite{PR:II},\cite{Jezierski:Lukasik:2006}, \cite{herdegen:1991} and the explicit solution in cartesian coordinates is well known,
\begin{align} \label{eq:kscart}
 \kappa^{AB} = U^{AB} + 2 x^{A'(A} V^{B)}_{A'} + x^{A'A}x^{B'B} W_{A'B'} .
\end{align}
Here  $U^{AB},W_{A'B'}$ are constant, symmetric spinors and $V_{A'}^B$ a constant complex vector which yield $2 \cdot 6 + 8 = 20$ independent real solutions. Each solution gives a charge when contracted into a spin-2 field, e.g. the linearized Weyl tensor, and integrated over a 2-sphere. In \cite[p.99]{PR:II}, 10 of these charges are related to a source for linearized gravity in the following sense. Given a divergence free, symmetric energy momentum tensor $T_{ab}$, one has for each Killing field $\xi^b$ the divergence free current $J_a = T_{ab} \xi^b$. Using linearized Einstein equations
\begin{align}
 \dot G_{ab} = \dot R_{acb}{}^c - \frac{1}{2}g_{ab} \dot R_{cd}{}^{cd} = -8 \pi G \dot T_{ab}
\end{align}
and the conformal Killing Yano equation \eqref{eq:cyt}, they showed
\begin{align}
 3 \int_{\partial \Sigma} \dot R_{abcd} *\hspace{-4pt}Y^{cd} \d x^a \wedge \d x^b = 16 \pi G \int_\Sigma e_{abc}{}^d \dot T_{df} \xi^f \d x^a \wedge \d x^b \wedge \d x^c .
\end{align}
Here $\Sigma$ denotes a 3 dimensional hypersurface with boundary $\partial\Sigma$ and $e_{abcd}$ is the Levi-Civita tensor. The left hand side is the charge integral described above, while the right hand side gives the more familiar form of a conserved three-form corresponding to a linarized source and a Killing vector $\xi^a = \tfrac{1}{3} Y^{ab}{}_{;b}$. Note that it is the dual conformal Killing Yano tensor on the left hand side, which gives the charge associated to the isometry $\xi^a$. In cartesian coordinates  $x^a = (t,x,y,z)$ the Poincar\'{e} isometries read
\begin{align}
 \mathcal{T}_a = \frac{\partial}{\partial x^a}  && \mathcal{L}_{ab} = x_a \frac{\partial}{\partial x^b} - x_b \frac{\partial}{\partial x^a}
\end{align}
and the relation to the charges is listed in table \ref{tab:isometries}. The angular momentum around the $z$-axis is found in the component $\mathcal{L}_{xy} = \partial_\phi$.
\begin{table}
\caption{Poincar\'{e} isometries and corresponding charges}
\begin{tabular}{llll}
\toprule
label& isometry & charge& \# \\
\hline
$\mathcal{T}_t$ & time translation & mass & 1\\
$\mathcal{T}_i$ & spatial translations & linear momenta & 3\\
$\mathcal{L}_{ij}$ & rotations &  angular momenta & 3 \\
$\mathcal{L}_{ti}$ & boosts & center of mass & 3\\
\bottomrule
\end{tabular} \label{tab:isometries}
\end{table}
Explicit expressions for linearized sources generating these charges can be found in \cite[eq.27]{jezierski:1995GReGr..27..821J}.

The 10 remaining charges cannot be generated this way, since the
corresponding conformal Killing Yano tensors have vanishing divergence (they
are Killing Yano tensors). One of these charges corresponds to the NUT
parameter\footnote{sometimes called dual mass, because of duality rotation from Schwarschild to NUT, see the appendix of \cite{Ramaswamy:Sen:1981}}, 
and the remaining nine are three
dual linear momenta and six ofam\footnote{Obstructions for angular
  momentum, see \cite{jezierski:2002CQGra..19.4405J}.}.
In the expression \eqref{eq:kscart} for a general Killing spinor, 
they correspond to $U$ and the
imaginary part of $V$. For a metric perturbation, which one might interpret
as a potential for the linarized curvature, these 10 additional charges
vanish, see \cite[\S 6.5]{PR:II}. 

To understand the charges as projections into $l=0$ and $l=1$ mode, we rederive the complete set of solutions in spherical coordinates using spin weighted spherical harmonics. A null tetrad for Minkowski spacetime in spherical coordinates $(t,r,\theta,\phi)$ (symmetric Carter tetrad) is given by
\begin{align*}
 \NPl^a = \frac{1}{\sqrt{2}} \bigg[1,1,0,0 \bigg] , &&
\NPn^a = \frac{1}{\sqrt{2}}\bigg[1,-1,0,0 \bigg] , &&
\NPm^a = \frac{1}{\sqrt{2}r}\bigg[0,0,1,\frac{\i}{\sin\theta} \bigg] ,
\end{align*}
with non vanishing spin coefficients
\begin{align*}
 \rho = -\frac{1}{\sqrt{2}r} = -\rho' , && \beta = \frac{\cot\theta}{2\sqrt{2}r} = \beta' .
\end{align*}
A general two-form can be expanded
\begin{align*} 
 Y =&+ \kappa_2 \tfrac{r}{2}(\d r - \d t) \wedge (\d\theta + \i  \sin\theta \, \d\varphi )\\ &- \kappa_1 ( \d t \wedge \d r + \i r^2 \sin\theta \, \d \theta \wedge \d \varphi )\\  &+ \kappa_0 \tfrac{r}{2}(\d r + \d t) \wedge (\d\theta - \i\sin\theta \, \d\varphi ) + c.c.
\end{align*}
and it is a conformal Killing Yano tensor, if the components $\kappa_i$ satisfy (\ref{eq:kscomp1},\ref{eq:kscomp2}). The subset \eqref{eq:kscomp1} of the Killing spinor equation becomes
\begin{align*}
 \left(\partial_t + \partial_r \right) \kappa_0 = 0, && \left(\partial_\theta + \frac{\i}{\sin\theta}\partial_\varphi - \cot\theta \right) \kappa_0 = 0, \\
 \left(\partial_t - \partial_r \right) \kappa_2 = 0, && \left(\partial_\theta - \frac{\i}{\sin\theta} \partial_\varphi - \cot\theta \right)\kappa_2 = 0,
\end{align*}
so $\kappa_0 = f_0(t-r) \Y{1}{1}{m}$ and $\kappa_2 = f_1(t+r) \Y{-1}{1}{m}$ with functions $f_i$ depending on advanced and retarded coordinates only. Finally \eqref{eq:kscomp2} can be solved for $\kappa_1$, which is only possible for particular functions $f_i$. The result is given in table \ref{tab:MinkKS}.
\begin{table} 
\caption{Solutions to the Killing spinor equation on Minkowski spacetime in
  spherical coordinates.} \label{tab:KS}
 \centering
\begin{tabular}{lccc||ccc} 
\toprule
& \multicolumn{3}{c}{components} & & \multicolumn{2}{c}{divergence} \\
label & $\kappa_0 / \sqrt{2}$ & $\kappa_1$ & $\kappa_2 / \sqrt{2}$ & combination & $\Re$ & $\Im$ \\
\hline
\rowcolor{g1}
&&& & $\Omega^0_{-1}$ & 0 & 0 \\ \rowcolor{g1}
$\Omega^0_m$ & $\Y{1}{1}{m}$ & $\Y{0}{1}{m}$ & $\Y{-1}{1}{m}$ & $\Omega^0_{0}$ & 0 & 0 \\ \rowcolor{g1}
&&& & $\Omega^0_{1}$ & 0 & 0 \\ \rowcolor{g1}
\rowcolor{g2}
&&& & $\Omega^1$ & $\mathcal{T}_t$ & 0 \\ \rowcolor{g2}
$\Omega^1$ & 0 & r & 0 & $\Omega^1_1 - \Omega^1_{-1}$ & $\mathcal{T}_x$ & 0 \\ \rowcolor{g2}
$\Omega^1_m$ & $(t-r) \Y{1}{1}{m}$ & $t \Y{0}{1}{m}$ & $(t+r) \Y{-1}{1}{m}$ & $\i\Omega^1_1 + \i\Omega^1_{-1}$ & $\mathcal{T}_y$ & 0\\ \rowcolor{g2}
&&& & $\Omega^1_0$ & $\mathcal{T}_z$ & 0 \\ \rowcolor{g2}
\rowcolor{g3}
&&& & $\Omega^2_1 - \Omega^2_{-1}$ & $\mathcal{L}_{tx}$ & $\mathcal{L}_{yz}$ \\ \rowcolor{g3}
$\Omega^2_m$ & $(t-r)^2 \Y{1}{1}{m}$ & $(t^2-r^2) \Y{0}{1}{m}$ & $(t+r)^2\Y{-1}{1}{m}$ & $\i\Omega^2_1 + \i\Omega^2_{-1}$ &  $\mathcal{L}_{ty}$ & $\mathcal{L}_{xz}$ \\ \rowcolor{g3}
&&& & $\Omega^2_0$ & $\mathcal{L}_{tz}$ & $\mathcal{L}_{xy}$ \\
\bottomrule
\end{tabular} \label{tab:MinkKS}
\end{table}
$\Omega^1$ is one complex solution, while $\Omega^i_m, i=0,1,2$ represent 3 complex solutions each, ($m=0,\pm 1$). We find the following correspondence to the solutions \eqref{eq:kscart} in cartesian coordinates
\begin{align*}
 \Omega^0_m \leftrightarrow U^{AB}, && \Omega^1,\Omega^1_m \leftrightarrow V_{A'}^A, && \Omega^2_m \leftrightarrow W_{A'B'}.
\end{align*}

\subsection{Conserved charges for type D spacetimes} \label{sec:curved}

The vacuum field equations in the algebraically special case of Petrov type D
have been integrated explicitly by Kinnersley \cite{kinnersley:1969}. An
explicit type D line element solving the Einstein-Maxwell 
equations with cosmological
constant is known, from which all type D line elements of this type 
can be derived by certain limiting procedures,
see \cite[\S 19.1.2]{exact:solutions:book}, see also 
\cite{Debever:Kamran:McLenaghan:1984}. The family of
type D spacetimes contains the Kerr and Schwarzschild solutions, but also
solutions with more complicated topology and asymptotic behaviour, such 
as the NUT- or C-metrics, and solutions whose orbits of the isometry group are
null. In the following, we again restrict to the vacuum case.

A Newman-Penrose tetrad such that the 
two real null vectors $l^a, n^a$ are aligned with
the two repeated principal null directions of a Weyl tensor of Petrov type 
D is called a principal tetrad. In this case,
\begin{align*}
 \Psi_0 = \Psi_1 = 0 = \Psi_3 = \Psi_4, && \kappa = \kappa' = 0  = \sigma = \sigma'
\end{align*}
and $\Psi_2 \neq 0$. 
Due to the integrability condition \eqref{eq:IntCond}, 
we have $\kappa_0 = 0 = \kappa_2$. Hence, the
components (\ref{eq:kscomp1},\ref{eq:kscomp2}) of the Killing spinor equation
simplify  to 
\begin{align} \label{eq:typeDKS}
 (\tho + \rho)\kappa_1 = 0 , &&
 (\edt + \tau) \kappa_1 = 0, && 
 ({\tho}' + \rho')\kappa_1 = 0 , &&
 ({\edt}' + \tau') \kappa_1 = 0 .
\end{align}
Comparison with the Bianchi identities
\begin{align} \label{eq:typeDbianchi}
 (\tho - 3\rho)\Psi_2 = 0 , &&
 (\edt - 3\tau) \Psi_2 = 0, && 
 ({\tho}' - 3\rho')\Psi_2 = 0 , &&
 ({\edt}' - 3\tau') \Psi_2 = 0, 
\end{align}
shows that $\kappa_1 := \psi \propto \Psi_2^{-1/3}$ is a solution, and in fact
up to a constant $\kappa_{AB} = \psi o_{(A} \iota_{B)}$ is the only solution
of the Killing spinor equation. 

The divergence $\xi^{AA'} =
  \nabla^{A'}{}_B \kappa^{AB}$ is a Killing vector field, which is
  proportional to a real Killing vector field for all 
type D spacetimes except for Kinnersley class IIIB,
cf. \cite{collinson:1976}. If $\xi^{AA'}$ is real, 
the imaginary part of $\kappa_{AB}$ 
is a Killing-Yano tensor. Spacetimes satisfying the just
mentioned condition are called 
  \textit{generalized Kerr-NUT} spacetimes \cite{ferrando:saez:2007JMP....48j2504F}. 
The square of the Killing-Yano tensor is a symmetric Killing tensor $K_{ab} =
  Y_{ac} Y^c{}_b$ and it follows, that $\eta^a = K^{ab} \xi_b$ is a Killing
  vector. On a Kerr background, $\xi^a$ and $\eta^a$ are linearly independent
  and span the space of isometries, see \cite{Hughston:Sommers:1973}. In the
  special case of a Schwarzschild background, $\eta^a$ vanishes, see also
  \cite{Collinson:Smith:1977} for details. 

For Kerr spacetime in Boyer-Lindquist coordinates we find 
\begin{align*}
 \Psi_2 = - \frac{M}{(r - \i a \cos\theta)^3} ,  && \psi \propto r - \i a \cos\theta
\end{align*}
and we set the factor of proportionality to 1, so that the solution 
\begin{align} \label{eq:kssolution}
 \kappa_0 = 0 && \kappa_1 = \psi && \kappa_2 = 0
\end{align} 
reduces to $\Omega^1$ as given in table \ref{tab:KS},
in the Minkowski limit $M,a \to 0$.  We find
$ \nabla_b \left( \psi Z^{1ab} \right) = 3\left( \partial_t \right)^a $. 
The Killing spinor with components given by \eqref{eq:kssolution} is 
\begin{equation}\label{eq:kappa}
\kappa_{AB} = -2\psi o_{(A} \io_{B)} , 
\end{equation}
We have
$ \psi Z^1_{ab} = \kappa_{AB} \eps_{A'B'} $ and therefore
$$
\left( \partial_t \right)_a = \frac{1}{3} \nabla^b \left( \psi Z^{1}_{ab} \right) = - \frac{2}{3} \nabla^{B'B} ( \kappa_{AB} \eps_{A'B'} ) = \frac{2}{3} \nabla_{A'}{}^{B}  \kappa_{AB} \, .
$$
Spin lowering the Weyl spinor using \eqref{eq:kappa} gives the Maxwell
  field $\psi_{ABCD} \kappa^{CD}$, which has charges proportional to 
\textit{mass} and \textit{dual mass}, see also
\cite{jezierski:lukasik:2009}. 
Letting $\MM(C,\kappa)$ denote the 
corresponding closed complex two-form we have 
\begin{equation} \label{eq:maxwell}
 \MM(C,\kappa)= \psi \Psi_2 Z^1 .
\end{equation}
Evaluating the charge for the Kerr metric yields 
\begin{align}
 \frac{1}{4\pi\i} \int_{S^2} \MM(C,\kappa) = \frac{1}{4\pi\i} \int_{S^2} - \frac{M}{(r-\i a \cos\theta)^2} (-\i)(r^2+a^2) \sin\theta \d\theta \wedge \d\varphi = M,
\end{align}
where $M$ is the ADM mass while the dual mass is zero.

The closed two-form \eqref{eq:maxwell} has been derived much earlier by Jordan, Ehlers and Sachs \cite{jordan:ehlers:sachs:1961:II}. We will repeat the derivation here, since this formulation can be generalized to linearized gravity most easily. On a type D background, the curvature forms and the connection simplify to 
\begin{align}
 \Sigma_0 = \Psi_2 Z^2 && \Sigma_1 = \Psi_2 Z^1 && \Sigma_2 = \Psi_2 Z^0 && \Gamma = \tau l - \rho m \, ,
\end{align}
so the middle Bianchi identity \eqref{eq:bianchi} becomes
\begin{align*}
 2\d \Sigma_1 &= 2\Psi_2 \left[ (\rho' \bar m - \tau' n) \wedge l \wedge m + (\rho m - \tau l) \wedge \bar m \wedge n \right] \\
&=  2 \Psi_2 (\rho' \NPl + \rho \NPn -\tau' \NPm - \tau \NPmbar) \wedge Z^1 \\
&= h \wedge \Sigma_1 \, ,
\end{align*}
where $ h =  2 (\rho' \NPl + \rho \NPn -\tau' \NPm - \tau \NPmbar)$ was used. 
As noted in \cite{fayos:ferrando:jaen:1990}, the Bianchi identities \eqref{eq:typeDbianchi} can be rewritten as $2 \d \Psi_2 = 3 h \Psi_2$ and one obtains 
\begin{align*}
\d(\Psi_2 Z^1) =  \d\Sigma_1 = \half h \wedge \Sigma_1 = \frac{1}{3} \d\Psi_2 \wedge Z^1 \, .
\end{align*}
We finally end up with the \textit{Jordan-Ehlers-Sachs conservation law}\cite{jordan:ehlers:sachs:1961:II},
\begin{align} \label{eq:jes}
\d \left( \Psi_2^{2/3} Z^1 \right) = 0 \, .
\end{align}
Using $\psi \propto \Psi_2^{-1/3}$, this is the same result as \eqref{eq:maxwell}. See also \cite{israel:bivector:book}, where the conservation law is generalised to spacetimes of Petrov type II. The result for type D backgrounds fit into the picture of Penrose potentials\cite{Goldberg:1990} and in the next section we will see that it generalizes to linear perturbations.

\section{Fackerell's conservation law} \label{sec:fack}

We can of course linearize the two-form \eqref{eq:maxwell}, which would
provide a charge for perturbations within the class of type D spacetimes. But
more generally, Fackerell \cite{fackerell:1982} derived a closed two-form for
arbitrary linear perturbations around a type D 
background\footnote{One can expect that such a structure for perturbations of
  algebraically special solutions exists also for other signatures. A
  classification of the Weyl tensor in Euclidean signature can be found in
\cite{Karlhede:1986}, see also 
  \cite{hacyan:1979PhLA...75...23H}, a unified formulation for arbitrary signature is given
  in \cite{Batista:2012}.}. 
Starting from this conservation law, Fackerell and Crossmann derived field equations for
perturbations of Kerr-Newmann spacetime. Let us give a shortened derivation
in the vacuum case.  

When linearizing (with parameter $\pert$) the general bivector equations around a type D background in principal tetrad, we have
\begin{align*}
\Gamma = \tau \NPl - \rho \NPm + O(\pert) && \Gamma' = \tau' \NPn - \rho' \NPmbar + O(\pert)
\end{align*}
and it follows
\begin{align*}
\d^\theta Z^0 = - \tfrac{1}{2} h \wedge Z^0 + O(\pert) &&
\d^\theta Z^2 = - \tfrac{1}{2} h \wedge Z^2 + O(\pert) \\
\Gamma' \wedge \Sigma_0 = (\tau' \NPm - \rho' \NPl) \wedge\Sigma_1 + O(\pert^2) && \Gamma \wedge \Sigma_2 = (-\tau \NPmbar + \rho \NPn) \wedge \Sigma_1 + O(\pert^2)\, .
\end{align*}
\begin{proof}
Since
\begin{align*}
\Sigma_0 &= \Psi_0 Z^0 + \Psi_1 Z^1 + \Psi_2 Z^2 \\
\Sigma_1 &= \Psi_1 Z^0 + \Psi_2 Z^1 + \Psi_3 Z^2 \\
\Sigma_2 &= \Psi_2 Z^0 + \Psi_3 Z^1 + \Psi_4 Z^2
\end{align*}
and
\begin{align*}
Z^0 = \NPmbar \wedge \NPn &&
Z^1 = \NPn \wedge \NPl - \NPmbar \wedge \NPm &&
Z^2 = \NPl \wedge \NPm
\end{align*}
we have
\begin{align*}
\Gamma' \wedge \Sigma_0 
&= (\tau' \NPn +\kappa \NPl - \rho' \NPmbar - \sigma \NPm) \wedge (\Psi_0 Z^0 + \Psi_1 Z^1 + \Psi_2 Z^2) \\
&= 
\Psi_0 (\kappa l \wedge \NPmbar \wedge n - \sigma m \wedge \NPmbar \wedge n) \\
&\hspace{5pt} - \Psi_1 (\rho' \NPmbar \wedge n \wedge l + \sigma \NPm \wedge n \wedge l + \tau' \NPn \wedge \NPmbar \wedge \NPm +\kappa \NPl \wedge \NPmbar \wedge \NPm) \\
&\hspace{5pt} + \Psi_2 (\tau' \NPn \wedge \NPl \wedge \NPm - \rho' \NPmbar \wedge \NPl \wedge \NPm) \\
&= - \Psi_1 (\rho' \NPmbar \wedge n \wedge l + \tau' \NPn \wedge \NPmbar \wedge \NPm) 
+ \Psi_2 (\tau' \NPn \wedge \NPl \wedge \NPm - \rho' \NPmbar \wedge \NPl \wedge \NPm) + O(\eps^2) \\
&= \Psi_1(-\rho' l + \tau' m) \wedge Z^0 + \Psi_2 ( \tau' m - \rho' l) \wedge Z^1 + O(\eps^2)
\end{align*}
Because $ (\tau' m - \rho' l) \wedge Z^2 = 0$ this could be added which yields the result.
\end{proof}

Now expanding the Bianchi identitiy \eqref{eq:bianchi}, we find $ \d \Sigma_1 = \thalf h \wedge \Sigma_1 + O(\pert^2) $ which can be written
\begin{align} 
(\d - \tfrac{1}{2}h \wedge) \Sigma_1 = O(\pert^2)
\end{align}
In the background, this gives the Jordan-Ehlers-Sachs conservation law \eqref{eq:jes}. For linearized gravity, making use of $3 h \Psi_2 = 2\d \Psi_2$, we find the identity
\begin{equation}\label{eq:fakint}\begin{aligned}
0 &= \psi(\d - \tfrac{1}{2}h \wedge) \dot \Sigma_1 
- \tfrac{1}{2}\psi \dot h \wedge \Sigma_1 \\
&=\d ( \psi{\dot\Psi_1} Z^0 + \psi{\dot \Psi_2}Z^1 + \psi{\dot \Psi_3} Z^2 + \psi\Psi_2 \dot Z^1) -\tfrac{1}{2} \psi\Psi_2 \dot h \wedge Z^1 \\
&= \d ( \psi{\dot \Psi_1} Z^0 + \psi{\dot \Psi_2}Z^1 + \psi{\dot \Psi_3} Z^2 + \tfrac{3}{2} \psi\Psi_2 \dot Z^1) \, ,
\end{aligned}\end{equation}
were the linearized version of $\d Z^1 = -h \wedge Z^1$ is used in the last step. Note, that also
\begin{align}
0 = \d ( \psi{\dot \Psi_1} Z^0 + \psi{\dot \Psi_2}Z^1 + \psi{\dot \Psi_3} Z^2) -\tfrac{3}{2} \psi\Psi_2 \dot h \wedge Z^1
\end{align}
holds, which looks similar to Maxwell equations with a source. We summarize the above discussion by the following
\begin{thm}
For linearized gravity on a vacuum type D background in principal tetrad exists a closed two-form 
\begin{align} \label{eq:linmass}
 \linmass = \psi\dot\Psi_1 Z^0 + \psi\dot\Psi_2 Z^1 + \psi\dot\Psi_3 Z^2 + \tfrac{3}{2} \psi\Psi_2 \dot Z^1
\end{align}
which can be used to calculate the ``linearized mass''. The integral
\begin{align} \label{eq:1stintcond}
 \frac{1}{4\pi \i} \int_{S^2} \linmass 
\end{align}
is conserved, gauge invariant and gives the linearized ADM mass.
\end{thm}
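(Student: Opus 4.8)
The plan is to establish, in turn, that $\linmass$ is a closed two-form, that \eqref{eq:1stintcond} is consequently a homology invariant and hence conserved, that it is insensitive to infinitesimal diffeomorphisms, and that its value is the linearized ADM mass. Closedness is essentially the content of \eqref{eq:fakint}: linearizing the middle Bianchi identity \eqref{eq:bianchi} about a vacuum type D background in a principal tetrad, and using the expansions of $\Gamma,\Gamma'$ and of $\d^\Theta Z^0,\d^\Theta Z^2,\Gamma'\wedge\Sigma_0,\Gamma\wedge\Sigma_2$ recorded just above the statement, one obtains $(\d-\tfrac12 h\wedge)\dot\Sigma_1 = O(\pert^2)$. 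Multiplying by the background Killing-spinor scalar $\psi$, inserting $\dot\Sigma_1 = \dot\Psi_1 Z^0+\dot\Psi_2 Z^1+\dot\Psi_3 Z^2+\Psi_2\dot Z^1$ and the type D relation $2\d\Psi_2 = 3h\,\Psi_2$, and finally using the linearization of $\d Z^1 = -h\wedge Z^1$, the $h$-terms reorganize into a total derivative, leaving $\d\linmass = 0$. The only care required is tracking orders in $\pert$ and treating $\psi,\Psi_2$ as fixed background quantities.

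Conservation then follows at once: the Kerr exterior is diffeomorphic to $\Reals^4$ with a solid cylinder removed, so any two $2$-spheres enclosing the black hole are homologous and, by Stokes' theorem, $\int_S\linmass$ agrees on all of them --- in particular it is independent of the Boyer--Lindquist time and radius. For gauge invariance I would put $\dot g_{ab}=\Lie_\zeta g_{ab}$ and drag the tetrad along $\zeta$, so that $\dot\Psi_i = \Lie_\zeta\Psi_i$ and $\dot Z^1 = \Lie_\zeta Z^1$. Because $\Psi_1\equiv\Psi_3\equiv 0$ on the background one has $\Lie_\zeta\Psi_1=\Lie_\zeta\Psi_3=0$, and from $\psi\Psi_2 = c\,\Psi_2^{2/3}$ together with $2\d\Psi_2=3h\,\Psi_2$ one gets $\d(\psi\Psi_2)=\psi\Psi_2\,h$ and $\Lie_\zeta\psi = -\tfrac12(\zeta\cdot h)\,\psi$; feeding these into $\linmass[\Lie_\zeta g]$ and using Cartan's formula together with the closedness of $\MM(C,\kappa)=\psi\Psi_2 Z^1$ from \eqref{eq:jes}, everything collapses to
\[
\linmass[\Lie_\zeta g] \;=\; \d\big(\tfrac32\,\psi\Psi_2\,(\zeta\cdot Z^1)\big),
\]
so \eqref{eq:1stintcond} vanishes on pure-gauge perturbations.

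It remains to identify the value. Using conservation, I would evaluate \eqref{eq:1stintcond} on $S^2(t,r)$ in the limit $r\to\infty$. For a perturbation within the Kerr family the perturbed spacetime is again type D, so $\dot\Psi_1=\dot\Psi_3=0$ in a principal tetrad, and a direct computation of the $\d\theta\wedge\d\varphi$ component of $\linmass$ on a coordinate sphere gives $\tfrac{1}{4\pi\i}\int_{S^2}\linmass = \dot\massADM$ (the variation of $\angmom$ not contributing); equivalently, $\linmass$ then differs from the linearization of $\MM(C,\kappa)=\psi\Psi_2 Z^1$ by an exact form, and $\tfrac{1}{4\pi\i}\int_{S^2}\psi\Psi_2 Z^1 = \massADM$ has variation $\dot\massADM$. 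For a general asymptotically flat solution of the linearized Einstein equations on the Kerr background, gauge invariance and the assumed fall-off of $\dot g$ remove the contributions of the non-Kerr and pure-gauge parts from the boundary term at spatial infinity, leaving $\tfrac{1}{4\pi\i}\int_S\linmass = \dot\massADM$; by conservation this then holds for every $2$-sphere in the exterior, which in turn produces the identity \eqref{eq:twocharges}. I expect this last step --- making the asymptotic matching with the ADM mass rigorous, equivalently reconciling $\tfrac{1}{4\pi\i}\int_S\linmass$ with the Iyer--Wald Hamiltonian charge $\dot\PP_{\partial_t;\infty}$ --- to be the main obstacle, as it requires controlling the peeling of $\dot\Psi_1,\dot\Psi_2,\dot\Psi_3$ and of $\dot Z^1$ near $i^0$ in terms of the ADM data; by contrast the closedness is a finite algebraic identity once the linearized equations of structure are assembled.
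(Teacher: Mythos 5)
Your route is the paper's: closedness comes from linearizing the middle Bianchi identity in the bivector formalism together with $2\d\Psi_2=3h\Psi_2$ and the linearized $\d Z^1=-h\wedge Z^1$, exactly as in \eqref{eq:fakint}; conservation is Stokes' theorem; your formula $\linmass[\Lie_\zeta g]=\d\bigl(\tfrac32\psi\Psi_2(\zeta\invneg Z^1)\bigr)$ is the paper's computation \eqref{eq:masscoordgauge}; and the identification with $\dot\massADM$ by evaluating the conserved integral on $S^2(t,r)$ as $r\to\infty$ is also how the paper argues, and is not carried out there in any more detail than in your sketch, so the ``main obstacle'' you flag is not a gap relative to the paper. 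One caution on the closedness step: the correct linearization of the Bianchi identity is $0=\psi(\d-\tfrac12 h\wedge)\dot\Sigma_1-\tfrac12\psi\,\dot h\wedge\Sigma_1$; your displayed intermediate statement ``$(\d-\tfrac12 h\wedge)\dot\Sigma_1=O(\pert^2)$'' drops the $\dot h\wedge\Sigma_1$ term, and it is precisely that term, converted through $\d\dot Z^1=-\dot h\wedge Z^1-h\wedge\dot Z^1$, which turns the coefficient $1$ of $\psi\Psi_2\dot Z^1$ into the $\tfrac32$ in $\linmass$; keep it, or the coefficient (and hence closedness) comes out wrong.

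The genuine omission is tetrad gauge invariance. The ingredients $\dot\Psi_1,\dot\Psi_2,\dot\Psi_3$ and $\dot Z^1$ are not determined by $\dot g_{ab}$ alone: a linearized tetrad carries six extra degrees of freedom, the infinitesimal Lorentz transformations \eqref{eq:tetradgauge}, and each of these scalars and $\dot Z^1$ separately depends on that choice, cf. \eqref{eq:psi1b}--\eqref{eq:psi3b} and \eqref{eq:linZ1}. Dragging the tetrad along $\zeta$ fixes only its response to a diffeomorphism and does not touch this Lorentz freedom, so without a further argument neither $\linmass$ nor its integral is yet a well-defined functional of the metric perturbation, and the claims ``gauge invariant'' and ``gives the linearized ADM mass'' are incomplete. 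The paper closes this in Proposition \ref{prop:exact} by checking that the Lorentz-dependent pieces of $\dot\Psi_1,\dot\Psi_3$ and of the trace part of $\dot\Psi_2$ cancel against those of the self-dual part of $\tfrac32\Psi_2\dot Z^1$, while the anti-self-dual part of $\dot Z^1$ is itself Lorentz invariant; you need this cancellation (or an equivalent rewriting of $\linmass$ in terms of $\dot C\cdot(Z_I,Z_1)$, the trace of the metric perturbation and the anti-self-dual part of $\dot Z^1$) to complete the proof.
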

The gauge invariance follows already from its relation to the ADM mass, but the integrand itself has interesting behaviour under gauge transformations. Beside infinitesimal changes of coordinates (coordinate gauge), there are infinitesimal Lorentz transformations of the tetrad (tetrad gauge). To discuss the second one, we need some notation. Following \cite{Crossmann:1976}, introduce 4 real functions $N_1,N_2,L_1,L_2 $ and 6 complex functions $L_3,N_3,M_i, i=1,..,4 $ to relate the linearized tetrad to the background tetrad
\begin{align} \label{eq:perttetrad}
 \begin{pmatrix}\NPl^a \\ \NPn^a \\ \NPm^a \\ \ba \NPm^a \end{pmatrix}_B =
\begin{pmatrix} 
L_1 & L_2 & L_3 & \ba L_3 \\
N_1 & N_2 & N_3 & \ba N_3 \\
M_1 & M_2  & M_3 & M_4\\
\ba M_1 & \ba M_2  & \ba M_4 & \ba M_3
\end{pmatrix}_B 
\begin{pmatrix}
 \NPl^a \\ \NPn^a \\ \NPm^a \\ \bar \NPm^a
\end{pmatrix} .
\end{align}
These are 16 d.o.f. at a point, 10 correspond to metric perturbations and 6 are infinitesimal Lorentz transformations (tetrad gauge). The linearized tetrad one-forms have the representation
\begin{align} 
 \begin{pmatrix}\NPl_a \\ \NPn_a \\ \NPm_a \\ \ba \NPm_a \end{pmatrix}_B =
\begin{pmatrix} 
-N_2 & - L_2 & \ba M_2 &  M_2 \\
-N_1 & -L_1 & \ba M_1 &  M_1 \\
\ba N_3 & \ba L_3  & - \ba M_3 & - M_4 \\
N_3 & L_3  & - \ba M_4 & - M_3
\end{pmatrix}_B 
\begin{pmatrix}
 \NPl_a \\ \NPn_a \\ \NPm_a \\ \ba \NPm_a
\end{pmatrix}.
\end{align}
It follows
\begin{align}
 \dot Z^0 &= -(L_1+M_3)Z^0 + \thalf(\ba M_1+N_3)Z^1 -\ba M_4\ba Z^0 - \thalf (\ba M_1-N_3)\ba Z^1 +N_1 \ba Z^2 \nonumber \\
\dot Z^1 &=  -(M_2 +\ba L_3)Z^0 - \thalf(L_1+N_2+M_3+\ba M_3)Z^1 -(\ba M_1 +N_3)Z^2 \nonumber\\ & \hspace{4mm} +(L_3-\ba M_2)\ba Z^0 - \thalf(L_1+N_2-M_3-\ba M_3)\ba Z^1 +(\ba N_3 -M_1)\ba Z^2 \label{eq:linZ1} \\
\dot Z^2 &= - \thalf(M_2 +\ba L_3)Z^1 -(N_2 +\ba M_3)Z^2 +L_2 \ba Z^0 + \thalf(M_2-\ba L_3)\ba Z^1 -M_4 \ba Z^2. \nonumber
\end{align}
Linearization of the tetrad representation of the metric yields
\begin{align*}
 h_{\NPl\NPn}  = -L_1 - N_2 &&&   
h_{\NPm\NPmbar}  = \ba M_3 + M_3 &&&
h_{\NPn\NPmbar} = N_3 -\ba M_1 &&& 
h_{\NPl\NPm} = \ba L_3 - M_2 
\end{align*}
and therefore $\textrm{tr}_g h = - 2 ( L_1 + N_2 + M_3 + \ba M_3 )$. One should also note, that the selfdual components of $\dot Z^1$ in $\linmass$ cancel some of the additional terms, not coming from the linearized Weyl tensor,
\begin{subequations}\begin{align}
 \dot \Psi_1 &= - \dot C \cdot (Z_0,Z_1) + \tfrac{3}{2} (\ba L_3 + M_2) \Psi_2  \label{eq:psi1b}\\
 \dot \Psi_2 &=- \dot C \cdot (Z_1,Z_1) + (L_1 + N_2 + M_3 + \ba M_3) \Psi_2 \label{eq:psi2b}\\
 \dot \Psi_3 &= - \dot C \cdot (Z_2,Z_1) + \tfrac{3}{2} (N_3 + \ba M_1) \Psi_2 \, . \label{eq:psi3b} 
\end{align}\end{subequations}

Using these facts, we show
\begin{prop}\label{prop:exact}
  On a spacetime of Petrov type D, the two-form $\linmass$ is tetrad gauge invariant and changes only with a term $\chi$ which is exact, $\chi=\d f$, under coordinate gauge transformations.
\end{prop}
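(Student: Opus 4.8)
The plan is to establish the two assertions in turn: tetrad gauge invariance by a direct substitution into \eqref{eq:linmass} of the transformation laws \eqref{eq:linZ1} and \eqref{eq:psi1b}--\eqref{eq:psi3b}, and the coordinate-gauge statement by reducing it, via Cartan's formula, to the closedness of $\MM(C,\kappa)=\psi\Psi_2 Z^1$ recorded in \eqref{eq:jes}.

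For the tetrad part, note that an infinitesimal Lorentz transformation of the tetrad leaves the metric perturbation untouched, $\dot g_{ab}=0$; hence the linearized Weyl tensor vanishes as a tensor, the curvature pieces $\dot C\cdot(Z_I,Z_J)$ in \eqref{eq:psi1b}--\eqref{eq:psi3b} drop out, and from the tetrad form of the metric one has $L_1+N_2=0$, $M_3+\ba M_3=0$, $N_3=\ba M_1$, $\ba L_3=M_2$. Feeding these into \eqref{eq:linZ1} makes the anti-self-dual part of $\dot Z^1$ vanish identically (its coefficients are exactly $\ba{h_{\NPl\NPm}}$, $-\thalf(h_{\NPl\NPn}+h_{\NPm\NPmbar})$, $\ba{h_{\NPn\NPmbar}}$) and collapses the self-dual part to $-2M_2 Z^0-2\ba M_1 Z^2$, while \eqref{eq:psi1b}--\eqref{eq:psi3b} give $\dot\Psi_1=3M_2\Psi_2$, $\dot\Psi_2=0$, $\dot\Psi_3=3\ba M_1\Psi_2$. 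Substituting into \eqref{eq:linmass}, the terms cancel pairwise and $\linmass=0$. Conceptually this is forced: a Lorentz rotation preserves self-duality, so the anti-self-dual part of $\dot Z^1$ can only be metric, and in the self-dual sector the $\Psi_2$-proportional contributions to $\dot\Psi_I$ are precisely those cancelled by $\tfrac{3}{2}\psi\Psi_2$ times the self-dual part of $\dot Z^1$.

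For the coordinate-gauge part, $\linmass$ is linear in $\dot g_{ab}$ (the tetrad-perturbation dependence having just been shown to drop out), so under $\dot g_{ab}\mapsto\dot g_{ab}+\Lie_X g_{ab}$ it changes by $\chi:=\linmass[\Lie_X g]$, and one only needs $\chi$ to be exact. To compute it I take the tetrad perturbation to be $\Lie_X$ of the background principal tetrad, which is legitimate by the tetrad gauge invariance just proved. With this choice the perturbed configuration is the pullback of the background along the flow of $X$, so every linearized quantity is the Lie derivative of its background counterpart; in particular $\dot\Psi_I=\Lie_X\Psi_I$ and $\dot Z^1=\Lie_X Z^1$, and since $\Psi_1\equiv\Psi_3\equiv 0$ in the principal frame of a type D spacetime, $\dot\Psi_1=\dot\Psi_3=0$. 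Therefore
\[
 \chi=\psi(\Lie_X\Psi_2)\,Z^1+\tfrac{3}{2}\,\psi\Psi_2\,\Lie_X Z^1 .
\]
Writing $\Phi:=\psi\Psi_2 Z^1=\MM(C,\kappa)$, which is closed by \eqref{eq:jes}, and using the type D relation $\psi\propto\Psi_2^{-1/3}$, equivalently $\psi^3\Psi_2=\mathrm{const}$ (so $\psi\,\Lie_X\Psi_2=-3\Psi_2\,\Lie_X\psi$ and $\Lie_X(\psi\Psi_2)=-2\Psi_2\,\Lie_X\psi$), one expands $\Lie_X Z^1=\Lie_X\big((\psi\Psi_2)^{-1}\Phi\big)=(\psi\Psi_2)^{-1}\Lie_X\Phi+\tfrac{2}{\psi}(\Lie_X\psi)\,Z^1$ and invokes Cartan's formula $\Lie_X\Phi=\d\iota_X\Phi$ (the $\iota_X\d\Phi$ term dropping since $\d\Phi=0$). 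The $\Psi_2(\Lie_X\psi)Z^1$ contributions cancel and one is left with $\chi=\tfrac{3}{2}\,\d\iota_X\Phi=\d f$ with $f=\tfrac{3}{2}\,\iota_X(\psi\Psi_2 Z^1)=\tfrac{3}{2}\,\psi\Psi_2\,\iota_X Z^1$.

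The bookkeeping in the first part is routine. The point needing care is in the second part: that the Lie-dragged tetrad is an admissible representative and yields $\dot\Psi_1=\dot\Psi_3=0$ --- this is precisely where the Petrov type D hypothesis is used, through the vanishing of $\Psi_1,\Psi_3$ in the principal frame. Granting this, the rest is a two-line manipulation resting only on the closedness of $\MM(C,\kappa)$ and the scaling $\psi\propto\Psi_2^{-1/3}$.
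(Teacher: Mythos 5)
Your proposal is correct and takes essentially the paper's route: tetrad-gauge invariance via the cancellation of the $\Psi_2$-proportional pieces of $\dot\Psi_1,\dot\Psi_3$ against the self-dual part of $\tfrac{3}{2}\psi\Psi_2\dot Z^1$ (the anti-self-dual part being metric-determined), and the coordinate-gauge change reduced, through Cartan's formula and the background identities $\d(\psi\Psi_2 Z^1)=0$, $\psi^3\Psi_2=\mathrm{const}$ (equivalent to the paper's $\d Z^1=-h\wedge Z^1$, $2\d\Psi_2=3h\Psi_2$), to the exact form $\tfrac{3}{2}\,\d\bigl[\psi\Psi_2\,(\iota_X Z^1)\bigr]$, matching \eqref{eq:masscoordgauge} up to sign convention. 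The only blemish is a harmless sign slip in your parenthetical: the middle anti-self-dual coefficient of $\dot Z^1$ equals $+\thalf(h_{\NPl\NPn}+h_{\NPm\NPmbar})$, not $-\thalf(h_{\NPl\NPn}+h_{\NPm\NPmbar})$, and in any case it vanishes for a pure tetrad-gauge perturbation.
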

\begin{remark} In the work of Fayos et al. \cite{fayos:ferrando:jaen:1990},
  a gauge in which $\d \left( \psi \Psi_2\dot Z^1 \right) = 0$ was used. It
  is not clear from that work 
  whether this gauge condition 
is compatible with a hyperbolic system of evolution
  equations for linearized gravity. 
\end{remark}
\begin{proof}[Proof of Proposition \ref{prop:exact}]
Let us first look at the coordinate gauge. Under infinitesimal coordinate transformations $x^a \to x^a + \xi^a$, a tensor field transforms with Lie derivative, $T \to T - \lie_\xi T$. For linearized gravity, we write this as $\dot T \to \dot T + \delta \dot T = \dot T - \lie_\xi T$. Now look at the middle bivector component $Z^1$ and use Cartan's identity $ \lie_\xi \omega = \d (\xi \invneg \omega) + \xi \invneg \d\omega $, which holds for arbitrary forms $\omega$. It follows for coordinate gauge transformations in $\linmass$,
\begin{equation}\label{eq:masscoordgauge}\begin{aligned}
 \delta \linmass
&= -\psi\xi(\Psi_2) Z^1 - \tfrac{3}{2}\psi\Psi_2 [ \d(\xi \invneg Z^1) +  \xi \invneg \d Z^1 ] \\
&= - \tfrac{3}{2}\psi\Psi_2 (\d + h\wedge)(\xi \invneg Z^1) \\
&= - \tfrac{3}{2} \d [ \psi\Psi_2(\xi \invneg Z^1)]
\end{aligned}\end{equation}
where $\xi \invneg h = \tfrac{2}{3} \Psi_2^{-1} \xi(\Psi_2)$ and $\xi \invneg (h\wedge Z^1) = (\xi \invneg h) Z^1 - h \wedge (\xi \invneg Z^1)$ was used. The two-form \eqref{eq:masscoordgauge} is exact and hence integrates to zero.

A tetrad gauge transformation changes the tetrad \eqref{eq:perttetrad} as follows,
\begin{align} \label{eq:tetradgauge}
 \delta\begin{pmatrix}\NPl^a \\ \NPn^a \\ \NPm^a \\ \NPmbar^a \end{pmatrix}_B =
\begin{pmatrix} 
A & 0 & \bar b & b \\
0 & -A & \bar a & a \\
a & b & \i \vartheta & 0 \\
\bar a & \bar b & 0 & -\i \vartheta
\end{pmatrix} 
\begin{pmatrix}
 \NPl^a \\ \NPn^a \\ \NPm^a \\  \NPmbar^a
\end{pmatrix}_B
\end{align}
with $a,b$ complex and $A,\vartheta$ real valued. It follows, that the tetrad gauge dependent terms in (\ref{eq:psi1b},\ref{eq:psi3b}) cancel the ones in \eqref{eq:linZ1}. The anti selfdual part in \eqref{eq:linZ1} is invariant, as follows from \eqref{eq:tetradgauge}. This shows the tetrad gauge invariance of $\linmass$ and therefore gauge invariance of \eqref{eq:1stintcond}. 
\end{proof}

Finally, to express the charge integral in a form similar to the Maxwell case \eqref{eq:integrability}, we need the $\theta\phi$ components of the bivectors,
\begin{align}
 Z^1_{\theta\phi}=-\i (r^2+a^2)\sin\theta &&& Z^0_{\theta\phi} = -Z^2_{\theta\phi} = \frac{a \sqrt{\Delta}}{2} \sin^2 \theta.
\end{align}
The charge integral becomes
\begin{align}
\frac{2\i}{\sqrt{\Delta}} \int_{S^2(t,r)} \mathcal{\dot M} =
\int_{S^2(t,r)} \left( 2 V_L^{-1/2} \dot{\widehat \Psi}_2 + \i a \sin \theta \Psi_{diff} \right) (r-\i a \cos\theta) \d\mu
\end{align}
with $V_L = \Delta/(r^2+a^2)^2$, $\d\mu=\sin\theta\d\theta\d\varphi$ and
\begin{subequations}\begin{align}
  \dot{\widehat \Psi}_2  &= \dot\Psi_2 -\Psi_2(M_3 + \bar M_3) \\
 \Psi_{diff} &= \dot\Psi_1 - \dot\Psi_3 - 3\Psi_2 \left( \Re(M_2 - M_1) - \i \Im(L_3 + N_3) \right)  \, .
\end{align}\end{subequations}

\section{Conclusions} \label{sec:conclusions} 
For each isometry of a given background, there is a conserved charge for the linearized gravitational field. Working in terms of linearized curvature, we derived a linearized mass charge (corresponding to the time translation isometry) for Petrov type D backgrounds, by using Penrose's idea of spin-lowering with a Killing spinor.

A second Killing spinor, corresponding to the axial isometry of Kerr spacetime does not exist, \eqref{eq:typeDKS}. Hence spin lowering cannot be used directly to derive a linearized angular momentum charge, even tough a canonical analysis provides one in terms of the linarized metric.

For a Schwarzschild background, gauge conditions are known, which eliminate
the gauge dependent non-radiating modes
\cite{zerilli:1970,jezierski:1999}. Understanding these conditions in a
geometric way and generalizing them to a Kerr background needs further investigation.

\appendix
\section{Coordinate expressions}
Using a Carter tetrad, the bivectors and connection one-forms in Boyer-Lindquist coordinates are
\begin{subequations}\begin{align}
 Z^1_{ab} &= 
\begin{pmatrix} 
 0 & -1 & -\i a \sin\theta & 0 \\
1 & 0 & 0 & -a \sin^2\theta \\
\i a \sin\theta & 0 & 0 & -\i (r^2+a^2)\sin\theta \\
0 & a \sin^2\theta & \i (r^2+a^2)\sin\theta & 0
\end{pmatrix} \\
 Z^0_{ab} &= \frac{1}{2\sqrt{\Delta}}
\begin{pmatrix} 
 0 & - \i a \sin\theta & \Delta & -\i \Delta \sin\theta \\
 \i a \sin\theta & 0  &\Sigma & - \i (r^2+a^2)\sin \theta \\
-\Delta & - \Sigma & 0 & a \Delta \sin^2 \theta  \\
\i \Delta \sin\theta  & \i (r^2+a^2)\sin \theta & -a \Delta \sin^2 \theta  & 0
\end{pmatrix} \\
 Z^2_{ab} &= \frac{1}{2\sqrt{\Delta}}
\begin{pmatrix} 
 0 &  \i a \sin\theta & -\Delta & -\i \Delta \sin\theta \\
 -\i a \sin\theta & 0  &\Sigma &  \i (r^2+a^2)\sin\theta \\
\Delta & - \Sigma & 0 & -a \Delta \sin^2\theta  \\
\i \Delta \sin\theta  & -\i (r^2+a^2)\sin\theta & a \Delta \sin^2 \theta  & 0
\end{pmatrix}
\end{align}\end{subequations}
\begin{subequations}\begin{align}
\sigma_{0a} &= \left( 0 , \frac{\i a \sin\theta}{2 p \sqrt{\Delta}} , - \frac{\sqrt{\Delta}}{2p}, -\frac{\i \sqrt{\Delta} \sin\theta}{2p} \right) \\
\sigma_{1a} &= \left( \frac{M}{2p \Sigma} , 0 , 0 , -\frac{Ma\bar p^2 \sin^2 \theta + ra \sin^2\theta \Sigma + \i \cos\theta (r^2+a^2)\Sigma}{2\Sigma^2}  \right) \\
\sigma_{2a} &= \sigma_{0a}
\end{align}\end{subequations}
Here, we used
\begin{align*}
p = r - \i a \cos\theta, && \Sigma = p \bar p, && \Delta = r^2 - 2Mr + a^2
\end{align*}

\subsection*{Acknowledgements} We would like to thank
Ji{\v{r}}{\'{\i}} Bi{\v{c}}{\'a}k, 
Pieter Blue, Domenico Giulini, Jacek Jezierski, Lionel Mason and 
Jean-Philippe Nicolas for helpful discussions.  
One of the authors (S.A.) gratefully acknowledges the support of
the \emph{Centre for Quantum Engineering and Space-Time Research (QUEST)} and
the \emph{Center of Applied Space Technology and Microgravity
  (ZARM)}, and thanks the Albert Einstein Institute, Potsdam for hospitality
during part of the work on this paper. 
L.A. thanks the Department of Mathematics of the University of
Miami for hospitality during part of the work on this paper. 

\newcommand{\prd}{Phys. Rev. D} 


\end{document}